\begin{document}

\title{A Variational Bayesian Inference-Inspired Unrolled Deep Network for MIMO Detection}

\author{Qian Wan, Jun Fang, ~\IEEEmembership{Senior Member}, Yinsen Huang, Huiping
Duan, and Hongbin Li, ~\IEEEmembership{Fellow,~IEEE}
\thanks{Qian Wan, Jun Fang and Yinsen Huang are with the National Key Laboratory
of Science and Technology on Communications, University of
Electronic Science and Technology of China, Chengdu 611731, China,
Email: JunFang@uestc.edu.cn}
\thanks{Huiping Duan is with the School of Information and Communications Engineering,
University of Electronic Science and Technology of China, Chengdu
611731, China, Email: huipingduan@uestc.edu.cn}
\thanks{Hongbin Li is
with the Department of Electrical and Computer Engineering,
Stevens Institute of Technology, Hoboken, NJ 07030, USA, E-mail:
Hongbin.Li@stevens.edu}
\thanks{This work was supported in part by the National Science
Foundation of China under Grants 61829103 and 61871421.}

\thanks{\textcopyright 2022 IEEE. Personal use of this material is permitted.
  Permission from IEEE must be obtained for all other uses, in any current or future
  media, including reprinting/republishing this material for advertising or promotional
  purposes, creating new collective works, for resale or redistribution to servers or
  lists, or reuse of any copyrighted component of this work in other works.}
}

\maketitle

\begin{abstract}
The great success of deep learning (DL) has inspired researchers
to develop more accurate and efficient symbol detectors for
multi-input multi-output (MIMO) systems. Existing DL-based MIMO
detectors, however, suffer several drawbacks. To address these
issues, in this paper, we develop a model-driven DL detector based
on variational Bayesian inference. Specifically, the proposed
unrolled DL architecture is inspired by an inverse-free
variational Bayesian learning framework which circumvents matrix
inversion via maximizing a relaxed evidence lower bound. Two
networks are respectively developed for independent and
identically distributed (i.i.d.) Gaussian channels and arbitrarily
correlated channels. The proposed networks, referred to as VBINet,
have only a few learnable parameters and thus can be efficiently
trained with a moderate amount of training samples. The proposed
VBINet-based detectors can work in both offline and online
training modes. An important advantage of our proposed networks
over state-of-the-art MIMO detection networks such as OAMPNet and
MMNet is that the VBINet can automatically learn the noise
variance from data, thus yielding a significant performance
improvement over the OAMPNet and MMNet in the presence of noise
variance uncertainty. Simulation results show that the proposed
VBINet-based detectors achieve competitive performance for both
i.i.d. Gaussian and realistic 3GPP MIMO channels.
\end{abstract}

\begin{keywords}
MIMO detection, variational Bayesian inference, unrolled deep
networks.
\end{keywords}

\section{Introduction}
Massive multiple-input multiple-output (MIMO) is a promising
technology for next-generation wireless communication systems
\cite{ZhangCui05,ShahmansooriGarcia17,LuLi14}. In massive MIMO
systems, the base station (BS), equipped with tens or hundreds of
antennas, simultaneously serves a smaller number of single-antenna
users sharing the same time-frequency resource
\cite{NgoLarsson13}. The large number of antennas at the BS helps
achieve almost perfect inter-user interference cancelation and has
the potential to enhance the spectrum efficiency by orders of
magnitude. Nevertheless, large-scale MIMO poses a significant
challenge for optimal signal detection. To realize the full
potential of massive MIMO systems, it is of significance to
develop advanced signal detection algorithms that can strike a
good balance between performance and complexity \cite{KongXia16}.

Signal detection in the MIMO framework has been extensively
investigated over the past decades. It is known that the maximum
likelihood (ML) detector is an optimal detector but has a
computational complexity that scales exponentially with the number
of unknown variables. Linear detectors such as zero-forcing (ZF)
and linear minimum mean-squared error (LMMSE) have a relatively
low computational complexity, but they suffer a considerable
performance degradation as compared with the ML detector.
Recently, Bayesian inference or optimization-based iterative
detectors, e.g.
\cite{tipping01,DonohoMaleki10,ChenWipf16,WuKuang14,Minka01,jaldenOttersten08},
were proposed. Empirical results show that these iterative schemes
such as the approximate message passing (AMP) and expectation
propagation (EP)-based detectors can achieve excellent performance
with moderate complexity \cite{WuKuang14,Minka01}. Nevertheless,
the AMP-based or EP-based methods utilize a series of
approximation techniques to facilitate the Bayesian inference.
When the channel matrix has a small size or is ill-conditioned,
the approximation may not hold valid, in which case these methods
may suffer severe performance degradation.

More recently, the great success of deep learning (DL) in a
variety of learning tasks has inspired researchers to use it as an
effective tool to devise more accurate and efficient detection and
estimation schemes
\cite{WangWenWang17,WeiZhao20,YeLi17,BaiChen20,HuCai20,MaGao21,HuangLiu21}.
The application of purely data-driven DL-based schemes, however,
faces the challenges of a long training time as a result of the
complicated network structure and a large number of learnable
parameters \cite{BaekKwak19,FarsadGoldsmith18}. To address this
issue, model-driven DL-based schemes were recently proposed for
MIMO detection \cite{UnShao19,HeWen20,KhaniAlizadeh20}. The main
idea of model-driven DL detector is to unfold the iterative
algorithm as a series of neural network layers with some learnable
parameters. Specifically, DetNet seems to be one of the earliest
works to study the problem of MIMO detection by unfolding the
projected gradient descent algorithm \cite{SamuelDiskin17}. The
number of learnable parameters for DetNet, however, is still
large. Thus it needs a large number of samples to train the neural
network. As an extension of DetNet, WeSNet aims to reduce the
model size by introducing a sparsity-inducing regularization
constraint in conjunction with trainable weight-scaling functions
\cite{MohammadMasouros20}. In order to further reduce the number
of learnable parameters of the network, OAMPNet, a neural
architecture inspired by the orthogonal AMP algorithm, was
proposed \cite{HeWen20}. The OAMPNet has only four learnable
parameters per layer, and thus is easy to train. Compared with
DetNet, OAMPNet can achieve competitive performance with much
fewer training samples. Both DetNet and OAMPNet are trained in an
offline mode, i.e. they aim to learn a single detector during
training for a family of channel matrices. Recently, another work
\cite{KhaniAlizadeh20} examined the MIMO detection problem from an
online training perspective. The proposed MMNet can achieve a
considerable performance improvement over OAMPNet when sufficient
online training is allowed. Nevertheless, online training means
that the model parameters are trained and tested for each
realization of the channel. When the test channel sample is
different from the training channel sample, MMNet would suffer a
substantial amount of performance degradation. It should be noted
that both OAMPNet and MMNet require the knowledge of the noise
variance for symbol detection. In practice, however, the prior
information about the noise variance is usually unavailable and
inaccurate estimation may result in substantial performance loss.

Recently, the information bottleneck (IB) theory attracts much
attention for providing an information theoretic view of deep
learning networks \cite{AguerriZaidi21,ZaidiAguerri20}. The IB
approach has found applications in a variety of learning problems.
It can also be applied to solve the MIMO detection problem.
Consider extracting the relevant information that some signal $X$
provides about another one $Y$ that is of interest. IB formulates
the problem of finding a representation $U$ that is maximally
informative about $Y$, while being minimally informative about $X$
\cite{AguerriZaidi21}. Accordingly, the optimal mapping of the
data $X$ to $U$ is found by solving a Lagrangian formulation. This
problem can be solved by approximating its variational bound by
parameterizing the encoder, decoder and prior distributions with
deep neural networks. Nevertheless, when applied to MIMO
detection, the IB approach needs a large number of learnable
parameters. Therefore it still faces the challenge encountered by
purely data-driven DL-based schemes.

In this paper, we propose a variational Bayesian
inference-inspired unrolled deep network for MIMO detection. Our
proposed deep learning architecture is mainly inspired by the
inverse-free Bayesian learning framework \cite{DuanYang17}, where
a fast inverse-free variational Bayesian method was proposed via
maximizing a relaxed evidence lower bound. Specifically, we
develop two unrolled deep networks, one for i.i.d. Gaussian
channels and the other for arbitrarily correlated channels. Our
proposed networks, referred to as the variational Bayesian
inference-inspired network (VBINet), have a very few learnable
parameters and thus can be efficiently trained with a moderate
number of training samples. The proposed VBINet can work in both
offline and online training modes. An important advantage of our
proposed networks over OAMPNet and MMNet is that the VBINet can
automatically learn the noise variance from data, which is highly
desirable in practical applications. In addition, the proposed
VBINet has a low computational complexity, and each layer of the
neural network only involves simple matrix-vector calculations.
Simulation results show that the proposed VBINet-based detector
achieves competitive performance for both i.i.d. Gaussian and 3GPP
MIMO channels. Moreover, it attains a significant performance
improvement over the OAMPNet and MMNet in the presence of noise
variance uncertainty.

The rest paper is organized as follows. Section
\ref{section-formualtion} discusses the MIMO detection problem.
Section \ref{section-overview} provides a brief overviews of some
state-of-the-art DL-based detectors. Section \ref{section-IFBL}
proposes an iterative detector for i.i.d. Gaussian channels within
the inverse-free variational Bayesian framework. A deep network
(VBINet) is then developed by unfolding the iterative detector in
Section \ref{IFBL-iid-chapter}. The iterative detector is extended
to correlated channels in Section \ref{section-arbitrary}, and an
improved VBINet is developed in Section
\ref{section-ImprovedIFBLNet} for arbitrarily correlated channels.
Numerical results are provided in Section
\ref{section-simulation}, followed by conclusion remarks in
Section \ref{section-conclusion}.

\section{Problem Formulation}   \label{section-formualtion}
We consider the problem of MIMO detection for uplink MIMO systems,
in which a base station equipped with $N_r$ antennas receives
signals from $N_t$ single-antenna users. The received signal
vector $\boldsymbol{y}\in\mathbb{C}^{N_r}$ can be expressed as
\begin{align}
\boldsymbol{y}=\boldsymbol{H}\boldsymbol{x}+\boldsymbol{n}
\label{data-model}
\end{align}
where $\boldsymbol{H}\in\mathbb{C}^{N_r\times N_t}$ denotes the
MIMO channel matrix, and
$\boldsymbol{n}\sim\mathcal{CN}(\boldsymbol{0},\frac{1}{\varepsilon}\boldsymbol{I})$
denotes the additive complex Gaussian noise. It is clear that the
likelihood function of the received signal is given by
\begin{align}
p(\boldsymbol{y}|\boldsymbol{x},\varepsilon)=
\big(\frac{\varepsilon^{\frac{1}{2}}}{\sqrt{2\pi}}\big)^{N_r}\exp\left\{-\frac{\varepsilon
\|\boldsymbol{y}-\boldsymbol{H}\boldsymbol{x}\|_2^2}{2}\right\}
\label{y-prior}
\end{align}
where each element of the detected symbol $\boldsymbol{x}$ belongs
to a discrete constellation set $\mathcal{C}=\{c_1,\ldots,c_M\}$.
In other words, the prior distribution for $\boldsymbol{x}$ is
given as
\begin{align}
p(\boldsymbol{x})=\prod_{i=1}^{N_t} p(x_i)
\end{align}
where
\begin{align}
p(x_i)=\sum_{j\in M}\frac{1}{M}\delta(x_i-c_j) \label{x-prior}
\end{align}
in which $\delta(\cdot)$ stands for the Dirac delta function.

The optimal detector for (\ref{data-model}) is the
maximum-likelihood (ML) estimator formulated as follows
\cite{ZhuMurch02, KhaniAlizadeh20}
\begin{align}
&\min_{\boldsymbol{x}\in \mathcal{C}} \quad
\|\boldsymbol{y}-\boldsymbol{H}\boldsymbol{x}\|_2^2
\end{align}
Although the ML estimator yields optimal performance, its
computational complexity is prohibitively high. To address this
difficulty, a plethora of detectors were proposed over the past
few decades to strike a balance between complexity and detection
performance. Recently, a new class of model-driven DL methods
called algorithm unrolling or unfolding were proposed for MIMO
detection. It was shown that model-driven DL-based detectors
present a significant performance improvement over traditional
iterative approaches. In the following, we first provide an
overview of state-of-the-art model-driven DL-based detectors,
namely, the OAMPNet \cite{HeWen20} and the MMNET
\cite{KhaniAlizadeh20}.

\section{Overview of State-Of-The-Art DL-Based Detectors}  \label{section-overview}
\subsection{OAMPNet}
OAMPNet is a model-driven DL algorithm for MIMO detection, which
is derived from the OAMP \cite{MaPing17}. The advantage of OAMP
over AMP is that it can be applied to unitarily-invariant matrices
while AMP is only applicable to i.i.d Gaussian measurement
matrices. OAMPNet can achieve better performance than OAMP, and
can adapt to various channel environments by use of some learnable
variables. Each iteration of the OAMPNet comprises the following
two steps:
\begin{align}
\quad & \text{LE} : \quad\quad \boldsymbol{r}_t=\boldsymbol{x}_t+
 \gamma_t\boldsymbol{W}_t(\boldsymbol{y}-\boldsymbol{H}\boldsymbol{x}_t)
\\
\quad & \text{NLE} : \quad
\boldsymbol{x}_{t+1}=\eta_t(\boldsymbol{r}_t,
 \sigma_t^2; \phi_t, \xi_t)
\end{align}
where $\boldsymbol{x}_{t}$ denotes the current estimate of
$\boldsymbol{x}$, and $\boldsymbol{W}_t$ is given as
\cite{MaPing17}:
\begin{align}
\boldsymbol{W}_t &=\frac{N_t}{\text{tr}\{\hat{\boldsymbol{W}}_t
 \boldsymbol{H}\}}\hat{\boldsymbol{W}}_t
 \label{W-t}
\end{align}
in which $\hat{\boldsymbol{W}}_t$ is the LMMSE matrix, i.e.
\begin{align}
\hat{\boldsymbol{W}}_t &=v_t^2\boldsymbol{H}^H(v_t^2
\boldsymbol{H}\boldsymbol{H}^H+\boldsymbol{R}_{n})^{-1}
\end{align}
$v_t^2$ is the error variance of
$\boldsymbol{x}_t-\boldsymbol{x}$, and can be calculated as
\begin{align}
v_t^2
&=\text{Tr}\Big\{E\big[\big(\boldsymbol{x}_t-\boldsymbol{x}\big)
 \big(\boldsymbol{x}_t-\boldsymbol{x}\big)^H\big]\Big\}
 \nonumber\\
 &=\frac{\|\boldsymbol{y}-\boldsymbol{H}\boldsymbol{x}_t\|_2^2-
\text{tr}(\boldsymbol{R}_{n})}
{\text{tr}(\boldsymbol{H}^H\boldsymbol{H})} \label{v2}
\end{align}
Here $\boldsymbol{R}_{n}$ denotes the covariance matrix of the
additive observation noise $\boldsymbol{n}$.

In the NLE step, the learnable variable $\theta_t$ is used to
regulate the error variance $\sigma_t^2$, and the error variance
$\sigma_t^2$ of $\boldsymbol{r}_t-\boldsymbol{x}$ is given by
\begin{align}
\sigma_{t}^2 =\frac{1}{N_t}\text{tr}\Big\{\boldsymbol{C}_t
\boldsymbol{C}_t\Big\}v_t^2+ \frac{\theta_t^2}{N_t}
\text{tr}\{\boldsymbol{W}_t\boldsymbol{R}_{n}\boldsymbol{W}_t^H\}
\label{sigma-variance}
\end{align}
where
$\boldsymbol{C}_t\triangleq\boldsymbol{I}-\theta_t\boldsymbol{W}_t\boldsymbol{H}
$, and (\ref{sigma-variance}) derives from the fact that
$\boldsymbol{r}_t-\boldsymbol{x}=
 \boldsymbol{C}_t
 \big(\boldsymbol{x}_t-\boldsymbol{x}\big)
 +\theta_t\boldsymbol{W}_t\boldsymbol{n}$. From (\ref{W-t}) to
(\ref{sigma-variance}), we see that calculations of
$\boldsymbol{W}_t$, $v_t^2$ and $\sigma_{t}^2$ require the
knowledge of the noise variance $1/\varepsilon$. For the NLE step,
it is a divergence-free estimator defined as
\begin{align}
\eta_t(\boldsymbol{r}_t,
 \sigma_t^2; \varphi_t, \xi_t)=\varphi_t\Big(E\{\boldsymbol{x};\boldsymbol{r}_t,
 \sigma_t^2\}-
\xi_t\boldsymbol{r}_t\Big)
\end{align}
with
\begin{align}
E\{x_i; r_{i,t},\sigma_{t}^2\} =\frac{\sum_{i}x_i N(x_i;
r_{i,t},\sigma_{t}^2) p(x_i)}{\sum_{i}N(x_i;r_{i,t},\sigma_{t}^2)
p(x_i)}
 \end{align}
where $x_i$ and $r_{i,t}$ denote the $i$th element of
$\boldsymbol{x}$ and $\boldsymbol{r}_t$, respectively, and
$N(x;\mu,\sigma^2)$ is used to represent the probability density
function of a Gaussian random variable with mean $\mu$ and
variance $\sigma^2$. For OAMPNet, each layer has four learnable
parameters, i.e. $\{\gamma_t, \theta_t, \varphi_t, \xi_t\}$, in
which $\gamma_t$ and $\theta_t$ are used to optimize the LE
estimator, whereas $\{\varphi_t, \xi_t\}$ are employed to adjust
the NLE estimator.

\subsection{MMNet}
MMNet has two variants, one for i.i.d Gaussian channels and the
other for arbitrarily correlated channels. Similar to OAMPNet, the
noise variance is also assumed known in advance for MMNet. Each
layer of the MMNet performs the following update:
\begin{align}
\quad & \text{LE} : \quad\quad
\boldsymbol{r}_t=\boldsymbol{x}_t+\boldsymbol{A}_t\left(
 \boldsymbol{y}-\boldsymbol{H}\boldsymbol{x}_t\right)
\\
\quad & \text{NLE} : \quad
\boldsymbol{x}_{t+1}=E\{\boldsymbol{x};\boldsymbol{r}_t,
 \boldsymbol{\sigma}_t^2\}
\end{align}
Here $\boldsymbol{A}_t$ is chosen to be
$\boldsymbol{A}_t=\theta_t^{(1)}\boldsymbol{H}^{H}$ for i.i.d
Gaussian channels and an entire trainable matrix for arbitrary
channels. The error variance vector $\boldsymbol{\sigma}_t^2$ can
be calculated as
\begin{align}
\boldsymbol{\sigma}_t^2=\frac{\boldsymbol{\theta}_t^{(2)}}{N_t}\Big(
\frac{\|\boldsymbol{I}-\boldsymbol{A}_t\boldsymbol{H}\|_F^2}
{\|\boldsymbol{H}\|_F^2}
\big[\|\boldsymbol{y}-\boldsymbol{H}\boldsymbol{x}_t\|_2^2-
\frac{N_r}{\varepsilon}\big]_{+}+\frac{\|\boldsymbol{A}_t\|_F^2}{\varepsilon}\Big)
\end{align}
where $[x]_{+}=\max(x,0)$, and MMNet chooses
$\boldsymbol{\theta}_t^{(2)}$ as
$\boldsymbol{\theta}_t^{(2)}\triangleq\theta_t^{(2)}\cdot
\boldsymbol{1}_{N_t}$ for i.i.d. Gaussian channels and an entire
trainable vector for arbitrary channels. Here, $\theta_t^{(2)}$ is
a learnable parameter, and $\boldsymbol{1}_{N_t}$ is an all-one
vector.

A major drawback of MMNet is that, for the correlated channel
case, MMNet needs to train model parameters for each realization
of $\boldsymbol{H}$. The reason is that for a specific
$\boldsymbol{A}_t$, it is difficult to have the neural network
generalized to a variety of channel realizations. Therefore, MMNet
achieves impressive performance when online training is allowed,
where the model parameters can be trained and tested for each
realization of the channel; while MMNet fails to work when only
offline training is allowed. Here offline training means that
training is performed over randomly generated channels, and its
performance is evaluated over another set of randomly generated
channels.

\section{IFVB Detector} \label{section-IFBL}
Our proposed DL architecture is mainly inspired by the
inverse-free variational Bayesian learning (IFVB) framework
developed in \cite{DuanYang17}, where a fast inverse-free
variational Bayesian method was proposed via maximizing a relaxed
evidence lower bound. In this section, we first develop a Bayesian
MIMO detector within the IFVB framework.

\subsection{Overview of Variational Inference} We first provide a
review of the variational Bayesian inference. Let
$\boldsymbol{\theta}\triangleq\{\boldsymbol{\theta}_1,\ldots,\boldsymbol{\theta}_I\}$
denote the hidden variables in the model. The objective of
Bayesian inference is to find the posterior distribution of the
latent variables given the observed data, i.e.
$p(\boldsymbol{\theta}|\boldsymbol{y})$. The computation of
$p(\boldsymbol{\theta}|\boldsymbol{y})$, however, is usually
intractable. To address this difficulty, in variational inference,
the posterior distribution $p(\boldsymbol{\theta}|\boldsymbol{y})$
is approximated by a variational distribution
$q(\boldsymbol{\theta})$ that has a factorized form as
\cite{TzikasLikas08}
\begin{align}
q(\boldsymbol{\theta})=\prod_{i=1}^I q_i(\boldsymbol{\theta}_i)
\end{align}
Note that the marginal probability of the observed data can be
decomposed into two terms
\begin{align}
\ln p(\boldsymbol{y})=L(q)+\text{KL}(q|| p)
\label{variational-decomposition}
\end{align}
where
\begin{align}
L(q)=\int q(\boldsymbol{\theta})\ln
\frac{p(\boldsymbol{y},\boldsymbol{\theta})}{q(\boldsymbol{\theta})}d\boldsymbol{\theta}
\end{align}
and
\begin{align}
\text{KL}(q|| p)=-\int q(\boldsymbol{\theta})\ln
\frac{p(\boldsymbol{\theta}|\boldsymbol{y})}{q(\boldsymbol{\theta})}d\boldsymbol{\theta}
\end{align}
where $\text{KL}(q|| p)$ is the Kullback-Leibler divergence
between $p(\boldsymbol{\theta}|\boldsymbol{y})$ and
$q(\boldsymbol{\theta})$, and $L(q)$ is the evidence lower bound
(ELBO). Since $\ln p(\boldsymbol{y})$ is a constant, maximizing
$L(q)$ is equivalent to minimizing $\text{KL}(q|| p)$,
 and thus
the posterior distribution $p(\boldsymbol{\theta}|\boldsymbol{y})$
can be approximated by the variational distribution
$q(\boldsymbol{\theta})$ through maximizing $L(q)$. The ELBO
maximization can be conducted in an alternating fashion for each
latent variable, which leads to \cite{TzikasLikas08}
\begin{align}
q_i(\boldsymbol{\theta}_i)=\frac{\exp(\langle\ln
p(\boldsymbol{y},\boldsymbol{\theta})\rangle_{k\neq
i})}{\int\exp(\langle\ln
p(\boldsymbol{y},\boldsymbol{\theta})\rangle_{k\neq
i})d\boldsymbol{\theta}_i} \label{general-update}
\end{align}
where $\langle\cdot\rangle_{k\neq i}$ denotes an expectation with
respect to the distributions $q_k(\boldsymbol{\theta}_k)$ for all
$k\neq i$.

\subsection{IFVB-Based MIMO Detector}
To facilitate the Bayesian inference, we place a $\text{Gamma}$
hyperprior over the hyperparameter $\varepsilon$, i.e.
\begin{align}
p(\varepsilon)=\text{Gamma}(\varepsilon|a,b) \label{prior-epsilon}
\end{align}
where $a$ and $b$ are small positive constants, e.g. $a=b=10^{-10}$. We have
\begin{align}
L(q)&=\int
q(\boldsymbol{\theta})\ln\frac{p(\boldsymbol{y},\boldsymbol{\theta})}
{q(\boldsymbol{\theta})}d\boldsymbol{\theta}
\nonumber\\
&=\int
q(\boldsymbol{\theta})\ln\frac{p(\boldsymbol{y}|\boldsymbol{x},\varepsilon)p(\boldsymbol{x})
p(\varepsilon)} {q(\boldsymbol{\theta})}d\boldsymbol{\theta}
\end{align}
Let $\boldsymbol{\theta}=\{\boldsymbol{x},\varepsilon\}$ denote
the hidden variables in the MIMO detection problem, and let the
variational distribution be expressed as
$q(\boldsymbol{\theta})=q_{x}(\boldsymbol{x})q_{\varepsilon}(\varepsilon)$.
Variational inference is conducted by maximizing the ELBO, which
yields a procedure which involves updates of the approximate
posterior distributions for hidden variables $\boldsymbol{x}$ and
$\varepsilon$ in an alternating fashion. Nevertheless, it is
difficult to derive the approximate posterior distribution for
$\boldsymbol{x}$ due to the coupling of elements of
$\boldsymbol{x}$ in the likelihood function
$p(\boldsymbol{y}|\boldsymbol{x},\varepsilon)$.

To address this difficulty, we, instead, maximize a lower bound on
the ELBO $L(q)$, also referred to as a relaxed ELBO. To obtain a
relaxed ELBO, we first recall the following fundamental property
for a smooth function.

\newtheorem{lemma}{Lemma}
\begin{lemma} \label{lemma1}
If the continuously differentiable function
$f:\mathbb{R}^n\rightarrow \mathbb{R}$ has bounded curvature, i.e.
there exists a matrix $\boldsymbol{T}$ such that
$\boldsymbol{T}\succcurlyeq \frac{\nabla^2
f(\boldsymbol{x})}{2},\forall \boldsymbol{x}$, then, for any
$\boldsymbol{u}, \boldsymbol{v}\in\mathbb{R}^n$, we have
\begin{align}
f(\boldsymbol{u})\leqslant
f(\boldsymbol{v})+(\boldsymbol{u}-\boldsymbol{v})^H\nabla
f(\boldsymbol{v})
+(\boldsymbol{u}-\boldsymbol{v})^H\boldsymbol{T}(\boldsymbol{u}-\boldsymbol{v})
\end{align}
\end{lemma}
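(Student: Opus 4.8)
The plan is to reduce the multivariate inequality to a one-dimensional Taylor estimate along the line segment joining $\boldsymbol{v}$ to $\boldsymbol{u}$, and then convert the curvature hypothesis into the clean quadratic bound. Concretely, I would introduce the scalar function $g(t)\triangleq f(\boldsymbol{v}+t(\boldsymbol{u}-\boldsymbol{v}))$ for $t\in[0,1]$, so that $g(0)=f(\boldsymbol{v})$ and $g(1)=f(\boldsymbol{u})$. Since $f$ is continuously differentiable (and, for the curvature hypothesis to be meaningful, twice differentiable), the chain rule gives $g'(t)=(\boldsymbol{u}-\boldsymbol{v})^H\nabla f(\boldsymbol{v}+t(\boldsymbol{u}-\boldsymbol{v}))$ and $g''(t)=(\boldsymbol{u}-\boldsymbol{v})^H\nabla^2 f(\boldsymbol{v}+t(\boldsymbol{u}-\boldsymbol{v}))(\boldsymbol{u}-\boldsymbol{v})$.

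The next step is the second-order Taylor expansion with integral remainder,
\begin{align}
f(\boldsymbol{u})=g(1)=g(0)+g'(0)+\int_0^1(1-t)g''(t)\,dt,
\end{align}
which already isolates $f(\boldsymbol{v})$ and the exact linear term $(\boldsymbol{u}-\boldsymbol{v})^H\nabla f(\boldsymbol{v})$. Everything then hinges on controlling the remainder integral using the hypothesis $\boldsymbol{T}\succcurlyeq\tfrac{1}{2}\nabla^2 f(\boldsymbol{x})$, equivalently $\nabla^2 f(\boldsymbol{x})\preccurlyeq 2\boldsymbol{T}$ for every $\boldsymbol{x}$ on the segment. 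This semidefinite ordering yields the pointwise scalar bound $g''(t)\leqslant 2(\boldsymbol{u}-\boldsymbol{v})^H\boldsymbol{T}(\boldsymbol{u}-\boldsymbol{v})$, and since the weight $1-t$ is nonnegative on $[0,1]$ I can integrate the inequality and use $\int_0^1(1-t)\,dt=\tfrac{1}{2}$ to collapse the factor of $2$, producing exactly $(\boldsymbol{u}-\boldsymbol{v})^H\boldsymbol{T}(\boldsymbol{u}-\boldsymbol{v})$ as the bound on the remainder. Combining this with the Taylor identity gives the claimed inequality.

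An equally short alternative I would keep in reserve is to define $h(\boldsymbol{x})\triangleq f(\boldsymbol{v})+(\boldsymbol{x}-\boldsymbol{v})^H\nabla f(\boldsymbol{v})+(\boldsymbol{x}-\boldsymbol{v})^H\boldsymbol{T}(\boldsymbol{x}-\boldsymbol{v})-f(\boldsymbol{x})$ and show it is nonnegative everywhere: one checks $h(\boldsymbol{v})=0$, $\nabla h(\boldsymbol{v})=\boldsymbol{0}$, and $\nabla^2 h(\boldsymbol{x})=2\boldsymbol{T}-\nabla^2 f(\boldsymbol{x})\succcurlyeq\boldsymbol{0}$, so $h$ is convex with a global minimum of value $0$ attained at $\boldsymbol{v}$, forcing $h(\boldsymbol{u})\geqslant 0$. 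I expect the only genuine care needed in either route to be bookkeeping around the $(\cdot)^H$ notation: the statement declares $\boldsymbol{u},\boldsymbol{v}\in\mathbb{R}^n$ and $f$ real-valued, so the conjugate transpose coincides with the ordinary transpose and $\boldsymbol{T}$ may be taken symmetric without loss of generality. The one substantive step is simply passing from the matrix inequality $\nabla^2 f\preccurlyeq 2\boldsymbol{T}$ to the integrated scalar bound, which the nonnegativity of $1-t$ renders routine.
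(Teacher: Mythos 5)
Your proof is correct. Note that the paper gives no inline argument for Lemma \ref{lemma1} at all---its ``proof'' is simply a pointer to \cite{SunBabu16, MagnusNeudecker19}---and your Taylor-expansion-with-integral-remainder argument (as well as your convexity-of-$h$ alternative) is exactly the standard derivation of this quadratic majorization bound found in that majorization--minimization literature, so you have in effect supplied the self-contained proof the paper delegates to its references. The two points of care you flag are also the right ones: the curvature hypothesis tacitly requires $f$ to be twice differentiable (continuous differentiability alone gives $\nabla^2 f$ no meaning), and since $\boldsymbol{u},\boldsymbol{v}\in\mathbb{R}^n$ and $f$ is real-valued, the $(\cdot)^H$ may be read as an ordinary transpose with $\boldsymbol{T}$ taken symmetric without loss of generality.
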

\begin{proof}
See \cite{SunBabu16, MagnusNeudecker19}.
\end{proof}

Invoking Lemma \ref{lemma1}, a lower bound on
$p(\boldsymbol{y}|\boldsymbol{x},\varepsilon)$ can be obtained as
\begin{align}
p(\boldsymbol{y}|\boldsymbol{x},\varepsilon)&=
\big(\frac{\varepsilon^{\frac{1}{2}}}{\sqrt{2\pi}}\big)^{N_r}
\exp\Big\{-\frac{\varepsilon}{2}\|\boldsymbol{y}-\boldsymbol{H}\boldsymbol{x}\|_2^2\Big\}
\nonumber\\
&\geqslant\big(\frac{\varepsilon^{\frac{1}{2}}}{\sqrt{2\pi}}\big)^{N_r}
\exp\Big\{-\frac{\varepsilon}{2}g(\boldsymbol{x},\boldsymbol{z})\Big\}
\triangleq
F(\boldsymbol{y},\boldsymbol{x},\varepsilon,\boldsymbol{z})
\label{g-y-inequality}
\end{align}
where
\begin{align}
g(\boldsymbol{x},\boldsymbol{z}) &
=\|\boldsymbol{y}-\boldsymbol{H}\boldsymbol{z}\|_2^2+
2\Re\big\{(\boldsymbol{x}-\boldsymbol{z})^H\boldsymbol{H}^H(\boldsymbol{H}\boldsymbol{z}-\boldsymbol{y})
\big\}
 \nonumber\\
&
 \quad +(\boldsymbol{x}-\boldsymbol{z})^H\boldsymbol{T}
 (\boldsymbol{x}-\boldsymbol{z})
   \end{align}
Here $\boldsymbol{T}$ needs to satisfy $\boldsymbol{T}\succcurlyeq
\boldsymbol{H}^H\boldsymbol{H}$, and the inequality
(\ref{g-y-inequality}) becomes equality when $\boldsymbol{T}=
\boldsymbol{H}^H\boldsymbol{H}$. Finding a matrix $\boldsymbol{T}$
satisfying $\boldsymbol{T}\succcurlyeq
\boldsymbol{H}^H\boldsymbol{H}$ is not difficult. In particular,
we wish $\boldsymbol{T}$ to be a diagonal matrix in order to help
obviate the difficulty of matrix inversion. An appropriate choice
of $\boldsymbol{T}$ is
\begin{align}
\boldsymbol{T}=(\lambda_{\text{max}}(\boldsymbol{H}^H\boldsymbol{H})+\epsilon)\boldsymbol{I}
\label{T-choice}
\end{align}
where $\lambda_{\text{max}}(\boldsymbol{A})$ denotes the largest
eigenvalue of $\boldsymbol{A}$, and $\epsilon$ is a small positive
constant, say, $\epsilon=10^{-10}$.

Utilizing (\ref{g-y-inequality}), a relaxed ELBO on $L(q)$ can be
obtained as
\begin{align}
L(q)\geqslant \tilde{L}(q,\boldsymbol{z})&=\int
q(\boldsymbol{\theta})\ln\frac{G(\boldsymbol{y},\boldsymbol{\theta},\boldsymbol{z})}
{q(\boldsymbol{\theta})}d\boldsymbol{\theta}
\end{align}
where
\begin{align}
G(\boldsymbol{y},\boldsymbol{\theta},\boldsymbol{z})\triangleq
F(\boldsymbol{y},\boldsymbol{x},\varepsilon,\boldsymbol{z})
p(\boldsymbol{x})p(\varepsilon) \label{G-prior}
\end{align}
In order to satisfy a rigorous distribution, the relaxed ELBO can
be further expressed as
\begin{align}
 \tilde{L}(q,\boldsymbol{z})&=\int q(\boldsymbol{\theta})\ln
 \frac{G(\boldsymbol{y},\boldsymbol{\theta},\boldsymbol{z})}
{q(\boldsymbol{\theta})}d\boldsymbol{\theta}
\nonumber\\
&=\int
q(\boldsymbol{\theta})\ln\frac{G(\boldsymbol{y},\boldsymbol{\theta},\boldsymbol{z})h(\boldsymbol{z})}
{q(\boldsymbol{\theta})h(\boldsymbol{z})}d\boldsymbol{\theta}
\nonumber\\
&=\int
q(\boldsymbol{\theta})\ln\frac{\tilde{G}(\boldsymbol{y},\boldsymbol{\theta},\boldsymbol{z})}
{q(\boldsymbol{\theta})}d\boldsymbol{\theta}-\ln h(\boldsymbol{z})
\label{L-cost}
\end{align}
where
\begin{align}
\tilde{G}(\boldsymbol{y},\boldsymbol{\theta},\boldsymbol{z})
\triangleq
G(\boldsymbol{y},\boldsymbol{\theta},\boldsymbol{z})h(\boldsymbol{z})
\end{align}
and the normalization term $h(\boldsymbol{z})$ guarantees that
$\tilde{G}(\boldsymbol{y},\boldsymbol{\theta},\boldsymbol{z})$
follows a rigorous distribution, and the exact expression of
$h(\boldsymbol{z})$ is
\begin{align}
h(\boldsymbol{z})\triangleq \frac{1}{\int
G(\boldsymbol{y},\boldsymbol{\theta},\boldsymbol{z})
d\boldsymbol{\theta}d\boldsymbol{y}}
\end{align}

Our objective is to maximize the relaxed ELBO
$\tilde{L}(q,\boldsymbol{z})$ with respect to
$q_{\theta}(\boldsymbol{\theta})$ and the parameter
$\boldsymbol{z}$. This naturally leads to a variational
expectation-maximization (EM) algorithm. In the E-step, the
posterior distribution approximations are computed in an
alternating fashion for each hidden variable, with other variables
fixed. In the M-step, $\tilde{L}(q,\boldsymbol{z})$ is maximized
with respect to $\boldsymbol{z}$, given $q(\boldsymbol{\theta})$
fixed. Details of the Bayesian inference are provided below.

\textbf{A. E-step}

1) \emph{Update of $q_x{(\boldsymbol{x})}$}: Ignoring those terms
that are independent of $\boldsymbol{x}$, the approximate
posterior distribution $q_x{(\boldsymbol{x})}$ can be obtained as
\begin{align}
\ln q_{\boldsymbol{x}}(\boldsymbol{x})
 &\propto
 \big\langle\ln\tilde{G}(\boldsymbol{y},\boldsymbol{\theta},\boldsymbol{z})
 \big\rangle_{q_{\varepsilon}(\varepsilon)}
 \nonumber\\
 &\propto
 \big\langle
 \ln F(\boldsymbol{y},\boldsymbol{x},\varepsilon,\boldsymbol{z})
 +\ln
p(\boldsymbol{x})\big\rangle_{q_{\varepsilon}(\varepsilon)}
 \nonumber\\
 &\propto
 \big\langle
 -\frac{\varepsilon}{2} g(\boldsymbol{x},\boldsymbol{z}) +
 \ln
p(\boldsymbol{x})\big\rangle_{q_{\varepsilon}(\varepsilon)}
\nonumber\\
 &\propto
 \Big\langle
 -\frac{\varepsilon}{2}\Big(
2\Re\big\{(\boldsymbol{x}-\boldsymbol{z})^H\boldsymbol{H}^H(\boldsymbol{H}\boldsymbol{z}-
\boldsymbol{y})\big\} +
\nonumber\\
&\quad\quad\quad (\boldsymbol{x}-\boldsymbol{z})^H\boldsymbol{T}
 (\boldsymbol{x}-\boldsymbol{z}) \Big) +\ln
p(\boldsymbol{x})\Big\rangle_{q_{\varepsilon}(\varepsilon)}
\nonumber\\
 &\propto
 \ln N(\boldsymbol{x};\boldsymbol{r},\boldsymbol{\Phi})
 +\ln
p(\boldsymbol{x})
\end{align}
where
\begin{align}
 \boldsymbol{\Phi} &=\frac{1}{\langle\varepsilon\rangle}\boldsymbol{T}^{-1} \label{eqn3}\\
 \boldsymbol{r} &=\langle\varepsilon\rangle\boldsymbol{\Phi}\Big(
 \boldsymbol{H}^{H}\boldsymbol{y}+\boldsymbol{T}
 \boldsymbol{z}-\boldsymbol{H}^{H}
 \boldsymbol{H}\boldsymbol{z}\Big)
 \nonumber\\
 &=\boldsymbol{z}+\boldsymbol{T}^{-1}\boldsymbol{H}^{H}\Big(
 \boldsymbol{y}-\boldsymbol{H}\boldsymbol{z}\Big)
 \label{x-update1}
 \end{align}
Since $\boldsymbol{T}$ is chosen to be a diagonal matrix,
calculation of $\boldsymbol{\Phi}$ is simple and no longer
involves computing the inverse of a matrix. Also, as the
covariance matrix $\boldsymbol{\Phi}$ is a diagonal matrix and the
prior distribution $p(\boldsymbol{x})$ has a factorized form, the
approximate posterior $q_x{(\boldsymbol{x})}$ also has a
factorized form, which implies that elements of $\boldsymbol{x}$
are mutually independent.

Thus, the mean of the $i$th element of $\boldsymbol{x}$ can be
easily calculated as
 \begin{align}
 \langle x_i\rangle=E\{x_i;r_i,\Phi_i\}=
 \frac{\sum_{i}x_i N(x_i; r_{i},\Phi_{i}) p(x_i)}{\sum_{i}N(x_i;r_{i},\Phi_{i}) p(x_i)},\quad \forall i
 \label{x-est1}
 \end{align}
 where $r_{i}$ is the $i$th element of $\boldsymbol{r}$, $\Phi_{i}$
is the $i$th diagonal element of  $\boldsymbol{\Phi}$. The
expectation of $x_i^2$ can be computed as
 \begin{align}
 \langle x_i^2\rangle=E\{x_i^2;r_i,\Phi_i\}=
 \frac{\sum_{i}x_i^2 N(x_i; r_{i},\Phi_{i}) p(x_i)}
 {\sum_{i}N(x_i;r_{i},\Phi_{i}) p(x_i)} ,\quad \forall i
  \label{x2-est1}
 \end{align}

2) \emph{Update of $q_{\varepsilon}{(\varepsilon)}$}: The
variational distribution $q_{\varepsilon}{(\varepsilon)}$ can be
calculated by
\begin{align}
\ln q_\varepsilon(\varepsilon)&\propto \langle
\ln\tilde{G}(\boldsymbol{y},\boldsymbol{\theta},\boldsymbol{z})
\rangle_{q_{x}(\boldsymbol{x})}
\nonumber\\
 &\propto
 \big\langle
 \ln F(\boldsymbol{y},\boldsymbol{x},\varepsilon,\boldsymbol{z})+\ln
p(\varepsilon)\big\rangle_{q_{x}(\boldsymbol{x})}
\nonumber\\
 &\propto
 \Big(a-1+\frac{N_r}{2}\Big)\ln \varepsilon
 -\Big(\frac{1}{2}\big\langle g(\boldsymbol{x},\boldsymbol{z})\big\rangle+b\Big)\varepsilon
 \end{align}
Thus $\varepsilon$ follows a Gamma distribution given as
\begin{align}
q_{\varepsilon}(\varepsilon)=\text{Gamma}(\varepsilon;\tilde{a},\tilde{b})
\end{align}
in which
\begin{align}
\tilde{a}&=a+\frac{N_r}{2}
\nonumber\\
\tilde{b}&=b+\frac{1}{2}\langle
g(\boldsymbol{x},\boldsymbol{z})\rangle \label{iid-epsilon1}
\end{align}
and
\begin{align}
\langle g(\boldsymbol{x},\boldsymbol{z})\rangle
 &=
 \|\boldsymbol{y}-\boldsymbol{H}\boldsymbol{z}\|_2^2+
2\Re\Big\{(\langle\boldsymbol{x}\rangle-\boldsymbol{z})^H\boldsymbol{H}^H
(\boldsymbol{H}\boldsymbol{z}-\boldsymbol{y})\Big\}
\nonumber\\
& \quad +\text{Tr}\Big(\boldsymbol{T}
\big(\langle\boldsymbol{x}\rangle-\boldsymbol{z}\big)
\big(\langle\boldsymbol{x}\rangle-\boldsymbol{z}\big)^H
 +\boldsymbol{T}\boldsymbol{\Sigma}_x
 \Big)
\end{align}
where $\boldsymbol{\Sigma}_x$ denotes the covariance matrix of
$\boldsymbol{x}$. Due to the independence among entries in
$\boldsymbol{x}$, $\boldsymbol{\Sigma}_x$ is a diagonal matrix and
its $i$th diagonal element is given by
\begin{align}
\Sigma_{x_i}=\langle x_i^2\rangle-\langle x_i\rangle^2,\quad \forall i
\end{align}

Finally, we have
\begin{align}
\langle\varepsilon\rangle=\frac{\tilde{a}}{\tilde{b}}
\label{iid-epsilon2}
\end{align}

\textbf{B. M-step}

Substituting $q(\boldsymbol{\theta}; \boldsymbol{z}^{\text{old}})$
into $\tilde{L}(q,\boldsymbol{z})$, an estimate of
$\boldsymbol{z}$ can be found via the following optimization
\begin{align}
\boldsymbol{z}^{\text{new}}&=\arg \max_{\boldsymbol{z}}\langle
\ln{G}(\boldsymbol{y},\boldsymbol{\theta},\boldsymbol{z})
\rangle_{q(\boldsymbol{\theta};\boldsymbol{z}^{\text{old}})}\triangleq
Q(\boldsymbol{z}|\boldsymbol{z}^{\text{old}})
\end{align}
Setting the derivative of the logarithm function to zero yields:
\begin{align}
\frac{\partial
\langle\ln{G}(\boldsymbol{y},\boldsymbol{\theta},\boldsymbol{z})
\rangle_{q(\boldsymbol{\theta};\boldsymbol{z}^{\text{old}})}}{\partial
\boldsymbol{z}}
=\langle\varepsilon\rangle\Big(\boldsymbol{T}-\boldsymbol{H}^H\boldsymbol{H}\Big)
(\langle\boldsymbol{x}\rangle-\boldsymbol{z}) =\boldsymbol{0}
\end{align}
As $\langle\varepsilon\rangle>0$ and $\boldsymbol{T}\succeq
\boldsymbol{H}^H\boldsymbol{H}$, the solution of $\boldsymbol{z}$
is
\begin{align}
\boldsymbol{z}^{\text{new}}=\langle\boldsymbol{x}\rangle
\label{iid-z-update}
\end{align}

\subsection{Summary of IFVB Detector}   \label{sum-IFBL}
For clarity, the proposed IFVB detector is summarized as an
iterative algorithm with each iteration consisting of two steps,
namely, a linear estimation (LE) step and a nonlinear estimation
(NLE) step:
\begin{align}
\quad & \text{LE} : \quad\quad \boldsymbol{r}_t=\boldsymbol{x}_t+
\boldsymbol{T}^{-1}\boldsymbol{H}^{H}(
 \boldsymbol{y}-\boldsymbol{H}\boldsymbol{x}_t) \label{LE-IFBL}
\\
\quad & \text{NLE} : \quad
\boldsymbol{x}_{t+1}=E\{\boldsymbol{x};\boldsymbol{r}_t,
\boldsymbol{\Phi}_t\},
 \label{NLE-IFBL}
\end{align}
where $\boldsymbol{r}_t$ and $\boldsymbol{\Phi}_t$ denote the
current estimate of $\boldsymbol{r}$ and $\boldsymbol{\Phi}$,
respectively. The linear estimator in equation (\ref{LE-IFBL})
derives from (\ref{x-update1}) and (\ref{iid-z-update}). Here with
a slight abuse of notation, we use $\boldsymbol{x}_t$ to represent
the current estimate of $\langle\boldsymbol{x}\rangle$. The
nonlinear estimator (\ref{NLE-IFBL}) is a vector form of
(\ref{x-est1}). In (\ref{NLE-IFBL}), the covariance matrix
$\boldsymbol{\Phi}_t$ can be calculated as $\boldsymbol{\Phi}_t
=\frac{1}{\varepsilon_t}\boldsymbol{T}^{-1}$ (cf. (\ref{eqn3})),
with $\varepsilon_{t+1}$ updated as
\begin{align}
\varepsilon_{t+1}=\frac{\tilde{a}}{\tilde{b}_{t+1}}=\frac{a+\frac{N_r}{2}}{b+\frac{1}{2}
g(\boldsymbol{x}_{t+1},\boldsymbol{x}_t)}
\end{align}
where
\begin{align}
& g(\boldsymbol{x}_{t+1},\boldsymbol{x}_t)
\nonumber\\
 &=
 \|\boldsymbol{y}-\boldsymbol{H}\boldsymbol{x}_t\|_2^2+
2\Re\Big\{(\boldsymbol{x}_{t+1}-\boldsymbol{x}_t)^H\boldsymbol{H}^H
(\boldsymbol{H}\boldsymbol{x}_t-\boldsymbol{y})\Big\}
\nonumber\\
& \quad +\text{Tr}\Big(\boldsymbol{T}
\big(\boldsymbol{x}_{t+1}-\boldsymbol{x}_t\big)
\big(\boldsymbol{x}_{t+1}-\boldsymbol{x}_t\big)^H
 +\boldsymbol{T}\boldsymbol{\Sigma}_x^{t+1}
 \Big)
\end{align}
and the $i$th diagonal element of $\boldsymbol{\Sigma}_x^{t+1}$
can be calculated as
\begin{align}
\Sigma_{x_i}^{t+1}=
E\{x_{i}^2;r_{i,t},\Phi_{i,t}\}-\big[E\{x_{i};r_{i,t},\Phi_{i,t}\}\big]^2,\quad
\forall i \label{eqn1}
\end{align}
in which $r_{i,t}$ and $\Phi_{i,t}$ denote the $i$th elements of
$\boldsymbol{r}_t$ and $\boldsymbol{\Phi}_t$, respectively.

\begin{figure*}[t]
\setlength{\abovecaptionskip}{0pt}
\setlength{\belowcaptionskip}{0pt} \centering
\includegraphics[width=12cm]{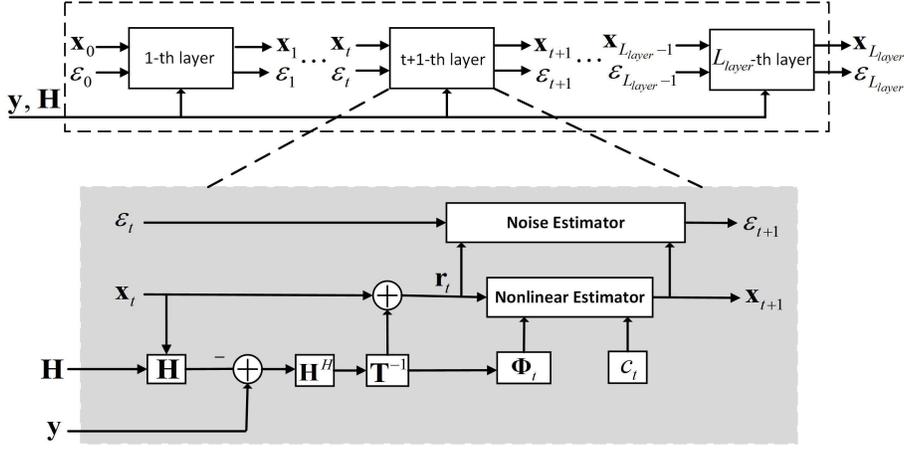}
\caption{Block diagram of the proposed VBINet detector.}
\label{IID-network}
\end{figure*}

\section{Proposed VBINet-Based Detector}   \label{IFBL-iid-chapter}
In this section, we propose a model-driven DL detector (referred
to as VBINet) which is developed by unfolding the iterative IFVB
detector. The network architecture is illustrated in Fig.
\ref{IID-network}, in which the network consists of
$L_{\text{layer}}$ cascade layers with some trainable parameters.
The trainable parameters of all layers are denoted as
$\boldsymbol{\Omega}\triangleq\big\{\boldsymbol{T},
\{c_t\}_{t=1}^{L_\text{layer}}\big\}$, where $\boldsymbol{T}$ is a
diagonal matrix common to all layers. We will explain these
trainable parameters in detail later. For the $(t+1)$th layer, the
input includes $\boldsymbol{y}$, $\boldsymbol{H}$,
$\boldsymbol{x}_{t}$, and $\varepsilon_{t}$, where
$\boldsymbol{x}_{t}$ and $\varepsilon_{t}$ denote the $t$th
layer's estimate of the signal and the noise variance,
respectively. Given
$\{\boldsymbol{y},\boldsymbol{H},\boldsymbol{x}_{t},\varepsilon_{t}\}$,
each layer performs the following updates:
\begin{align}
\quad & \text{LE} : \quad\quad \boldsymbol{r}_t=\boldsymbol{x}_t+
\boldsymbol{T}^{-1}\boldsymbol{H}^{H}\left(
 \boldsymbol{y}-\boldsymbol{H}\boldsymbol{x}_t\right) \label{LE-IFBLNet},
\\
\quad & \text{NLE} : \quad
 \boldsymbol{x}_{t+1}=c_t E\{\boldsymbol{x};\boldsymbol{r}_t,  \boldsymbol{\Phi}_t\}
 +(1-c_t) \boldsymbol{x}_t,
 \label{NLE-IFBLNet} \\
\quad & \text{Update of $\varepsilon_{t+1}$} :\quad
\varepsilon_{t+1}=\frac{a+\frac{N_r}{2}}{b+\frac{1}{2}
g_{\text{net}}(\boldsymbol{x}_{t+1},\boldsymbol{x}_t)}
\end{align}
where
\begin{align}
&\quad\quad\quad\quad\quad\quad\quad\quad\boldsymbol{\Phi}_t
=\frac{1}{\varepsilon_t}\boldsymbol{T}^{-1}
,\\
& g_{\text{net}}(\boldsymbol{x}_{t+1},\boldsymbol{x}_t)
\nonumber\\
& =
 \|\boldsymbol{y}-\boldsymbol{H}\boldsymbol{x}_t\|_2^2+
2\Re\Big\{(\boldsymbol{x}_{t+1}-\boldsymbol{x}_t)^H\boldsymbol{H}^H
(\boldsymbol{H}\boldsymbol{x}_t-\boldsymbol{y})\Big\}
\nonumber\\
& \quad +\text{Tr}\Big(\boldsymbol{T}
\big(\boldsymbol{x}_{t+1}-\boldsymbol{x}_t\big)
\big(\boldsymbol{x}_{t+1}-\boldsymbol{x}_t\big)^H
 +\boldsymbol{T}\boldsymbol{\Sigma}_{\text{net}}^{t+1}
 \Big),
 \\
 &\Sigma_{i, {\text{net}}}^{t+1}=   c_t^2
\Big(E\{x_{i}^2;r_{i,t},\Phi_{i,t}\}-\big[E\{x_{i};r_{i,t},\Phi_{i,t}\}\big]^2\Big)
,\quad \forall i.\label{Sigma-net-i}
\end{align}
in which $\Sigma_{i, {\text{net}}}^{t+1}$ is the $i$th diagonal
element of the diagonal matrix
$\boldsymbol{\Sigma}_{\text{net}}^{t+1}$.

We see that the update formulas
(\ref{LE-IFBLNet})--(\ref{Sigma-net-i}) are similar to the IFVB
detector's update formulas (\ref{LE-IFBL})--(\ref{eqn1}).
Nevertheless, there are two major differences. Firstly, in the LE
step, the diagonal matrix $\boldsymbol{T}$ is no longer
pre-specified; instead, it is a trainable diagonal matrix. The
reason for doing this is to make the network learnable and
flexible. Specifically, we believe that $\boldsymbol{T}$ given in
(\ref{T-choice}) may not be the best, and hope that we can find a
more suitable diagonal matrix through training. Secondly, in the
NLE step, $\boldsymbol{x}_{t+1}$ is updated by applying an
adaptive damping scheme with a learnable parameter $c_t$. In the
adaptive damping scheme, the damping parameter can help control
the convergence rate and improve the robustness of the system
\cite{RanganSchniter19}.

One may wonder why not use an individual $\boldsymbol{T}$ for each
layer of the network. The reason is that using a common diagonal
matrix $\boldsymbol{T}$ for all layers can substantially reduce
the number of learnable variables, which facilitates training when
the number of training data is limited. Also, notice that the
iterative algorithm developed in the previous section uses a same
$\boldsymbol{T}$ throughout its iterative process. This suggests
that using an individual $\boldsymbol{T}$ for each layer may not
be necessary. To corroborate our conjecture, we conducted
experiments to compare these two schemes. Simulation results show
that using an individual $\boldsymbol{T}$ for each layer does not
achieve a clear performance improvement over the scheme of using a
common diagonal matrix $\boldsymbol{T}$.

Moreover, due to the use of the term $(1-c_t)\boldsymbol{x}_t$ in
the NLE step, the covariance matrix of $\boldsymbol{x}_{t+1}$,
denoted by $\boldsymbol{\Sigma}_x$, cannot be updated via
(\ref{eqn1}). Note that in the IFVB detector, according to
(\ref{NLE-IFBL}), the NLE outputs the mean value of the variable
$\boldsymbol{x}$. Meanwhile, in the VBINet, the output of the NLE
is the mean of a new variable $\boldsymbol{x}_{\text{net}}$
defined as
\begin{align}
\boldsymbol{x}_{\text{net}}= c_t \boldsymbol{x}+ (1-c_t)
\boldsymbol{x}_t
\end{align}
Thus the mean of $\boldsymbol{x}_{\text{net}}$ is given by
\begin{align}
\boldsymbol{x}_{t+1}=\langle\boldsymbol{x}_{\text{net}}\rangle=
c_t E\{\boldsymbol{x};\boldsymbol{r}_t, \boldsymbol{\Phi}_t\}+
(1-c_t) \boldsymbol{x}_t \label{iid-xnew}
\end{align}
and the $i$th diagonal element of the diagonal covariance matrix
$\boldsymbol{\Sigma}_{\text{net}}^{t+1}$ of the random variable
$\boldsymbol{x}_{\text{net}}$ is given by
\begin{align}
\Sigma_{i, {\text{net}}}^{t+1} &=\text{var}(x_{i,\text{net}})
\nonumber\\
&=  c_t^2  (\langle x_{i}^2\rangle-\langle x_i\rangle^2)
\nonumber\\
&= c_t^2
\Big(E\{x_{i}^2;r_{i,t},\Phi_{i,t}\}-\big[E\{x_{i};r_{i,t},\Phi_{i,t}\}\big]^2\Big)
,\quad \forall i
\end{align}
in which $x_{i,\text{net}}$ denotes the $i$th element of
$\boldsymbol{x}_{\text{net}}$. This explains how
(\ref{Sigma-net-i}) is derived.

\emph{Remark 1:} We see that the total number of learnable
parameters of VBINet is $|\boldsymbol{\Omega}|= N_t +
L_{\text{layer}}$, as each layer shares the same learnable
variable $\boldsymbol{T}$ while $c_t$ are generally different for
different layers. Suppose that the number of users is $N_t=16$ and
the number of layers is set to $L_{\text{layer}}=10$. The total
number of learnable parameters of VBINet is $26$. As a comparison,
the OAMPNet has $40$ parameters to be learned for the same problem
being considered. Due to the small number of learnable parameters,
the proposed VBINet can be effectively trained even with a limited
number of training data samples.

\emph{Remark 2:} An important advantage of the proposed VBINet
over state-of-the-art deep unfolding-based detectors such as the
OAMPNet and the MMNet is that the VBINet can automatically
estimate the noise variance $1/\varepsilon$, while both the
OAMPNet and the MMNet require the knowledge of $1/\varepsilon$,
and an inaccurate knowledge of the noise variance results in a
considerable amount of performance degradation, as will be
demonstrated later in our experiments. Another advantage of the
proposed VBINet is that the update formulas can be efficiently
implemented via simple matrix/vector operations. No matrix inverse
operation is needed.

\emph{Remark 3:} The proposed VBINet can provide an approximate
posterior probability of the transmitted symbols, which is useful
for channel decoders relying on the estimate of the posterior
probability of the transmitted symbols \cite{ChenWang00}. Thus the
proposed VBINet can be well suited for joint symbol detection and
channel decoding.

\section{IFVB Detector: Extension To Correlated Channels} \label{section-arbitrary}
The proposed IFVB detector performs well on channel matrices with
independent and identically distributed (i.i.d.) entries.
Nevertheless, similar to AMP and other message-passing-based
algorithms, its performance degrades considerably on realistic,
spatially-correlated channel matrices. To address this issue, we
discuss how to improve the proposed IFVB detector to accommodate
arbitrarily correlated channel matrices.

The key idea behind the improved IFVB detector is to conduct a
truncated singular value decomposition (SVD) of the channel matrix
$\boldsymbol{H}=\boldsymbol{U}\boldsymbol{\Sigma}\boldsymbol{V}^H$,
where $\boldsymbol{U}\in\mathbb{C}^{N_r\times N_t}$,
$\boldsymbol{\Sigma}\in\mathbb{C}^{N_t\times N_t}$, and
$\boldsymbol{V}\in\mathbb{C}^{N_t\times N_t}$. Define
$\boldsymbol{A}\triangleq \boldsymbol{U}\boldsymbol{\Sigma}$ and
$\boldsymbol{s}\triangleq \boldsymbol{V}^H\boldsymbol{x}$. The
received signal vector $\boldsymbol{y}\in\mathbb{C}^{N_r}$ in
(\ref{data-model}) can be equivalently written as
 \begin{align}
\boldsymbol{y}=\boldsymbol{A}\boldsymbol{s}+\boldsymbol{n}
\label{data-model-new}
\end{align}
Instead of detecting $\boldsymbol{x}$, we aim to estimate
$\boldsymbol{s}$ based on the above observation model. Let
$\boldsymbol{\theta}\triangleq\{\boldsymbol{s},\varepsilon\}$
denote the hidden variables, and let $p(\boldsymbol{s})$ denote
the prior distribution of $\boldsymbol{s}$. Thus the ELBO is given
as
\begin{align}
L(q)&\triangleq \int q(\boldsymbol{\theta})
\ln\frac{p(\boldsymbol{y},\boldsymbol{s},\varepsilon)}
{q(\boldsymbol{\theta})}d\boldsymbol{\theta}
\nonumber\\
&= \int q(\boldsymbol{\theta})
\ln\frac{p(\boldsymbol{y}|\boldsymbol{s},\varepsilon)
p(\boldsymbol{s})p(\varepsilon)}
{q(\boldsymbol{\theta})}d\boldsymbol{\theta} \label{Lq-arbitrary}
\end{align}
where
\begin{align}
p(\boldsymbol{y}|\boldsymbol{s},\varepsilon)
&=\big(\frac{\varepsilon^{\frac{1}{2}}}{\sqrt{2\pi}}\big)^{N_r}
\exp\big\{-\frac{\varepsilon
\|\boldsymbol{y}-\boldsymbol{A}\boldsymbol{s}\|_2^2}{2}\big\}
\label{O1}
\end{align}
in which $\varepsilon$ still follows a Gamma distribution as
assumed in (\ref{prior-epsilon}).

Similar to what we did in the previous section, we resort to a
relaxed ELBO to facilitate the variational inference. Let
$\boldsymbol{T}\succcurlyeq\boldsymbol{\Sigma}^2$. Invoking Lemma
\ref{lemma1}, a lower bound on $p(\boldsymbol{y}|\boldsymbol{s},
\varepsilon)$ can be obtained as
\begin{align}
p(\boldsymbol{y}|\boldsymbol{s},\varepsilon)&=
\big(\frac{\varepsilon^{\frac{1}{2}}}{\sqrt{2\pi}}\big)^{N_r}
\exp\Big\{-\frac{\varepsilon}{2}\|\boldsymbol{y}-\boldsymbol{A}\boldsymbol{s}\|_2^2\Big\}
\nonumber\\
&\geqslant\big(\frac{\varepsilon^{\frac{1}{2}}}{\sqrt{2\pi}}\big)^{N_r}
\exp\Big\{-\frac{\varepsilon}{2}g(\boldsymbol{s},\boldsymbol{z})\Big\}
\triangleq
F(\boldsymbol{y},\boldsymbol{s},\varepsilon,\boldsymbol{z})
\label{g-y-inequality-2}
\end{align}
where
\begin{align}
 g(\boldsymbol{s},\boldsymbol{z})
 & \triangleq\|\boldsymbol{y}-\boldsymbol{A}\boldsymbol{z}\|_2^2+
2\Re\big\{(\boldsymbol{s}-\boldsymbol{z})^H\boldsymbol{A}^H(\boldsymbol{A}\boldsymbol{z}-\boldsymbol{y})\big\}
 +\nonumber\\
 &\quad\quad
 (\boldsymbol{s}-\boldsymbol{z})^H\boldsymbol{T}(\boldsymbol{s}-\boldsymbol{z})
 \end{align}
and the inequality becomes equality when $\boldsymbol{T}=
\boldsymbol{\Sigma}^2$. Here, we consider a specialized form of
$\boldsymbol{T}\triangleq
\boldsymbol{\overline{T}}+\frac{\delta}{\varepsilon}\boldsymbol{I}$,
where the first term is a diagonal matrix independent of
$\varepsilon$ and the second term is a function of $\varepsilon$.
We will show that such an expression of $\boldsymbol{T}$ leads to
an LMMSE-like estimator for the LE step.

Utilizing (\ref{g-y-inequality-2}), a relaxed ELBO can be obtained
as
\begin{align}
L(q)\geqslant \tilde{L}(q,\boldsymbol{z})&=\int
q(\boldsymbol{\theta})\ln\frac{G(\boldsymbol{y},\boldsymbol{s},\varepsilon,\boldsymbol{z})}
{q(\boldsymbol{\theta})}d\boldsymbol{\theta} \label{L-cost2}
\end{align}
where
\begin{align}
G(\boldsymbol{y},\boldsymbol{s},\varepsilon,\boldsymbol{z})=
F(\boldsymbol{y},\boldsymbol{s},\varepsilon,\boldsymbol{z})
p(\boldsymbol{s})p(\varepsilon)
\end{align}
Similarly, we aim to maximize the relaxed ELBO with respect to
$q(\boldsymbol{\theta})$ and the parameter $\boldsymbol{z}$, which
leads to a variational EM algorithm. In the E-step, given
$\boldsymbol{z}$, the approximate posterior distribution of
$\boldsymbol{\theta}\triangleq \{\boldsymbol{s},\varepsilon\}$ is
updated. In the M-step, the parameter $\boldsymbol{z}$ can be
updated by maximizing $\tilde{L}(q,\boldsymbol{z})$ with
$q(\boldsymbol{\theta})$ fixed.  Details of the variational
EM algorithm are provided below.

\textbf{A. E-step}

1) \emph{Update of $q_{\boldsymbol{s}}(\boldsymbol{s})$}: The
variational distribution $q_{\boldsymbol{s}}(\boldsymbol{s})$ can
be calculated as
\begin{align}
 \ln q_{\boldsymbol{s}}(\boldsymbol{s})&\propto
 \big\langle\ln{G}(\boldsymbol{y},\boldsymbol{s},\varepsilon,\boldsymbol{z})
 \big\rangle_{q_{\varepsilon}(\varepsilon)}
 \nonumber\\
 &\propto
 \big\langle
 \ln F(\boldsymbol{y},\boldsymbol{s},\varepsilon,\boldsymbol{z})+ \ln
p(\boldsymbol{s})\big\rangle_{q_{\varepsilon}(\varepsilon)}
 \nonumber\\
 &\propto
 \big\langle
 -\frac{\varepsilon}{2} g(\boldsymbol{s},\boldsymbol{z})
 +\ln p(\boldsymbol{s})\big\rangle_{q_{\varepsilon}(\varepsilon)}
\nonumber\\
 &\propto
 \Big\langle
 -\frac{\varepsilon}{2}\Big(
2\Re\big\{(\boldsymbol{s}-\boldsymbol{z})^H\boldsymbol{A}^H(\boldsymbol{A}\boldsymbol{z}-\boldsymbol{y})\big\}
 +
 \nonumber\\
 &\quad\quad(\boldsymbol{s}-\boldsymbol{z})^H\boldsymbol{T}(\boldsymbol{s}-\boldsymbol{z})
  \Big)
   +\ln
p(\boldsymbol{s})\Big\rangle_{q_{\varepsilon}(\varepsilon)}
\nonumber\\
 &\overset{(a)}{\propto}
 \ln N(\boldsymbol{s};\boldsymbol{\bar{r}},\boldsymbol{\bar{\Phi}})
  +\ln p(\boldsymbol{s})
\end{align}
where in $(a)$, we have
\begin{align}
 \boldsymbol{\bar{\Phi}} &=\frac{1}{\langle\varepsilon\rangle}
 \langle\boldsymbol{T}^{-1}\rangle
\\
 \boldsymbol{\bar{r}} &=\left\langle\boldsymbol{T}^{-1}\Big(
 \boldsymbol{A}^{H}\boldsymbol{y}+\boldsymbol{T}
 \boldsymbol{z}-\boldsymbol{A}^{H}
 \boldsymbol{A}\boldsymbol{z}\Big)\right\rangle
 \nonumber\\
 &=\boldsymbol{z}+\langle\boldsymbol{T}^{-1}\rangle
 \boldsymbol{A}^{H}\Big(
 \boldsymbol{y}-\boldsymbol{A}\boldsymbol{z}\Big)
 \label{r-bar-1}
\end{align}
in which
\begin{align}
\langle\boldsymbol{T}^{-1}\rangle =\boldsymbol{\overline{T}}^{-1}+
\frac{\langle\varepsilon\rangle}{\delta}\boldsymbol{I}
\end{align}
Note that the approximate posterior distribution
$q_{\boldsymbol{s}}(\boldsymbol{s})$ is difficult to obtain
because the prior $p(\boldsymbol{s})$ has an intractable
expression. To address this issue, we, instead, examine the
posterior distribution $q_{\boldsymbol{x}}(\boldsymbol{x})$.
Recalling that $\boldsymbol{x}=\boldsymbol{V}\boldsymbol{s}$, we
have
\begin{align}
\ln q_{\boldsymbol{x}}(\boldsymbol{x})&\propto
 \ln N(\boldsymbol{x};\boldsymbol{r},\boldsymbol{\Phi})
  +\ln
p(\boldsymbol{x})
\end{align}
where
\begin{align}
\boldsymbol{r}&=\boldsymbol{V}\boldsymbol{\bar{r}} \label{eqn2}
\\
\boldsymbol{\Phi}&=\boldsymbol{V}\boldsymbol{\bar{\Phi}}\boldsymbol{V}^H,
 \label{r-phi-1}
\end{align}
To calculate $q_{\boldsymbol{x}}(\boldsymbol{x})$, we neglect the
cross-correlation among entries of $\boldsymbol{x}$, i.e., treat
the off-diagonal entries of $\boldsymbol{\Phi}$ as zeros. This
trick helps obtain an analytical approximate posterior
distribution $q_{\boldsymbol{x}}(\boldsymbol{x})$, and meanwhile
has been empirically proven effective. Let $\Phi_i$ represent the
$i$th diagonal element of $\boldsymbol{\Phi}$. The first and
second moments of $x_i$ can be updated as
\begin{align}
\langle x_i\rangle&=E\{x_i;r_i,\Phi_i\}=\frac{\sum_{i}x_i N(x_i;
r_i,\Phi_{i}) p(x_i)}{\sum_{i}N(x_i;r_i,\Phi_{i}) p(x_i)}
  , \quad\forall i
  \nonumber\\
\langle x_i^2\rangle&=E\{x_i^2;r_i,\Phi_i\}=\frac{\sum_{i}x_i^2
N(x_i; r_i,\Phi_{i}) p(x_i)}{\sum_{i}N(x_i;r_i,\Phi_{i}) p(x_i)}
   , \quad\forall i
  \label{x-est-arbitrary}
\end{align}
Thus we arrive at $q_{\boldsymbol{x}}(\boldsymbol{x})$ follows a
Gaussian distribution with its mean $\langle
\boldsymbol{x}\rangle=[\langle
x_1\rangle\phantom{0}\ldots\phantom{0}\langle x_{N_t}\rangle]^T$
and covariance matrix
$\boldsymbol{\Sigma}_{\boldsymbol{x}}=\text{diag}(\Sigma_{x_1},\ldots,\Sigma_{x_{N_t}})$,
where
\begin{align}
  \Sigma_{x_i}&=\langle x_i^2\rangle-\langle x_i\rangle^2
   , \quad\forall i
\end{align}

Since $q_{\boldsymbol{x}}(\boldsymbol{x})$ follows a Gaussian
distribution, the approximate posterior of
$q_{\boldsymbol{s}}(\boldsymbol{s})$ also follows a Gaussian
distribution, and its mean and covariance matrix can be readily
obtained as
\begin{align}
 \langle\boldsymbol{s}\rangle=\boldsymbol{V}^H\langle\boldsymbol{x}\rangle
 \label{s-est2}
\end{align}
\begin{align}
 \boldsymbol{\Sigma}_{\boldsymbol{s}}=\boldsymbol{V}^H\boldsymbol{\Sigma}_{\boldsymbol{x}}\boldsymbol{V}
\end{align}

2) \emph{Update of $q_\varepsilon(\varepsilon)$}: The variational
distribution $q_\varepsilon(\varepsilon)$ can be calculated as
\begin{align}
\ln q_\varepsilon(\varepsilon)&\propto \langle
\ln{G}(\boldsymbol{y},\boldsymbol{s},\varepsilon,\boldsymbol{z})
\rangle_{q_{\boldsymbol{s}}(\boldsymbol{s})}
\nonumber\\
 &\propto
 \big\langle
 \ln F(\boldsymbol{y},\boldsymbol{s},\varepsilon,\boldsymbol{z})+\ln
p(\varepsilon)\big\rangle_{q_{\boldsymbol{s}}(\boldsymbol{s})}
\nonumber\\
 &\propto
 \Big(a-1+\frac{N_r}{2}\Big)\ln \varepsilon
 -\Big(\frac{1}{2}\langle \bar{g}(\boldsymbol{s},\boldsymbol{z})\rangle+b\Big)\varepsilon
\end{align}
in which
\begin{align}
\langle \bar{g}(\boldsymbol{s},\boldsymbol{z})\rangle
 =&
 \|\boldsymbol{y}-\boldsymbol{A}\boldsymbol{z}\|_2^2+
2\Re\{(\langle\boldsymbol{s}\rangle-\boldsymbol{z})^H\boldsymbol{A}^H
(\boldsymbol{A}\boldsymbol{z}-\boldsymbol{y})\}+
\nonumber\\
& \big(\langle\boldsymbol{s}\rangle-\boldsymbol{z}\big)^H
\boldsymbol{\overline{T}}\big(\langle\boldsymbol{s}\rangle-\boldsymbol{z}\big)
 +\text{Tr}\Big(\boldsymbol{\overline{T}}\boldsymbol{\Sigma}_{\boldsymbol{s}}
 \Big)
 \label{eps-a-est2}
\end{align}
Thus the hyperparameter $\varepsilon$ follows a Gamma
distribution, i.e.
\begin{align}
q_{\varepsilon}(\varepsilon)=\text{Gamma}(\varepsilon;\tilde{a},\tilde{b})
\end{align}
where $\tilde{a}$ and $\tilde{b}$ are given by
\begin{align}
\tilde{a}=a+\frac{N_r}{2}, \quad \tilde{b}=b+\frac{1}{2}\langle
\bar{g}(\boldsymbol{s},\boldsymbol{z})\rangle \label{eps-a-est0}
\end{align}
Also, we have
\begin{align}
\langle\varepsilon\rangle=\frac{\tilde{a}}{\tilde{b}}
 \label{eps-a-est1}
\end{align}

\textbf{B. M-step}

Substituting $q(\boldsymbol{\theta};\boldsymbol{z}^{\text{old}})$
into $\tilde{L}(q,\boldsymbol{z})$, the parameter $\boldsymbol{z}$
can be optimized via
\begin{align}
\boldsymbol{z}^{\text{new}}&=\arg
\max_{\boldsymbol{z}}\langle\ln{G}(\boldsymbol{y},\boldsymbol{s},\varepsilon,\boldsymbol{z})
\rangle_{q(\boldsymbol{\theta};\boldsymbol{z}^{\text{old}})}
\end{align}
Setting the derivative of the logarithm function to zero yields
\begin{align}
\frac{\partial
\langle\ln{G}(\boldsymbol{y},\boldsymbol{\theta},\boldsymbol{z})
\rangle_{q(\boldsymbol{\theta};\boldsymbol{z}^{\text{old}})}}{\partial
\boldsymbol{z}} =\langle\varepsilon\rangle\big(\boldsymbol{\overline{T}}+
\frac{\delta}{\langle\varepsilon\rangle}\boldsymbol{I}
 -\boldsymbol{\Sigma}^2\big)\big(\langle\boldsymbol{s}\rangle-\boldsymbol{z}\big)
=\boldsymbol{0}
\end{align}
Since $\langle\varepsilon\rangle>0$ and $\boldsymbol{\overline{T}}+
\frac{\delta}{\langle\varepsilon\rangle}\boldsymbol{I}\succeq
\boldsymbol{\Sigma}^2$, the solution of $\boldsymbol{z}$ is given
by
\begin{align}
\boldsymbol{z}^{\text{new}}=\langle\boldsymbol{s}\rangle
\label{w-est}
\end{align}

\subsection{Summary of The Improved IFVB Detector} For the sake of clarity,
the improved IFVB detector can be summarized as an iterative
algorithm with each iteration consisting of a linear estimation
step and a nonlinear estimation step:
\begin{align}
\quad  \text{LE} : \quad\quad~ \boldsymbol{\bar{r}}_t
&=\boldsymbol{s}_t+\boldsymbol{T}_t^{-1}
 \boldsymbol{A}^{H}(
 \boldsymbol{y}-\boldsymbol{A}\boldsymbol{s}_t)  \label{LE-arbitrary-a}
\\
 \boldsymbol{r}_t&=\boldsymbol{V}\boldsymbol{\bar{r}}_t \label{LE-arbitrary-b}
 \\
\quad  \text{NLE} : ~~ x_{i, t+1}&=E\{x_i;r_{i,t}, \Phi_{i,t}\},
\quad \forall i \label{NLE-arbitrary-a}
\\
  \boldsymbol{s}_{t+1} &=\boldsymbol{V}^H\boldsymbol{x}_{t+1} \label{NLE-arbitrary-b}
\end{align}
 where $\boldsymbol{T}_t\triangleq
\boldsymbol{\overline{T}}+\frac{\delta}{\varepsilon_t}\boldsymbol{I}$, and
 $r_{i,t}$ and $\Phi_{i,t}$ denote the $i$th elements of
$\boldsymbol{r}_t$ and $\boldsymbol{\Phi}_t$ , respectively. The
linear estimator in equations (\ref{LE-arbitrary-a}) and
(\ref{LE-arbitrary-b}) derives from (\ref{r-bar-1}) and
(\ref{eqn2}). The nonlinear estimator in equations
(\ref{NLE-arbitrary-a}) and (\ref{NLE-arbitrary-b}) follows from
(\ref{x-est-arbitrary}) and (\ref{s-est2}). In
(\ref{NLE-arbitrary-a}), the variance $\Phi_{i,t}$ can be obtained
as
\begin{align}
\Phi_{i,t}=\boldsymbol{v}_i^r\boldsymbol{\bar{\Phi}}_t(\boldsymbol{v}_i^r)^H,
\quad \forall i
\end{align}
where $\boldsymbol{v}_i^r$ denotes the $i$th row of
$\boldsymbol{V}$, and $\boldsymbol{\bar{\Phi}}_t
=\frac{1}{\varepsilon_t}\boldsymbol{T}_t^{-1}$. Here
$\varepsilon_{t+1}$ is updated as
\begin{align}
\varepsilon_{t+1}=\frac{\tilde{a}}{\tilde{b}_{t+1}}=\frac{a+\frac{N_r}{2}}{b+\frac{1}{2}
\bar{g}(\boldsymbol{s}_{t+1},\boldsymbol{s}_t)}
\end{align}
where
\begin{align}
 \bar{g}(\boldsymbol{s}_{t+1},\boldsymbol{s}_{t})
 =&
 \|\boldsymbol{y}-\boldsymbol{A}\boldsymbol{s}_{t}\|_2^2+
2\Re\big\{(\boldsymbol{s}_{t+1}-\boldsymbol{s}_{t})^H\boldsymbol{A}^H
(\boldsymbol{A}\boldsymbol{s}_{t}-\boldsymbol{y})\big\}
\nonumber\\
& +\big(\boldsymbol{s}_{t+1}-\boldsymbol{s}_{t}\big)^H
\boldsymbol{\overline{T}}\big(\boldsymbol{s}_{t+1}-\boldsymbol{s}_{t}\big)
 +\text{Tr}\big(\boldsymbol{\overline{T}}\boldsymbol{\Sigma}_{\boldsymbol{s}}^{t+1}
 \big) \label{g-s}
\end{align}
in which
\begin{align}
 \boldsymbol{\Sigma}_{\boldsymbol{s}}^{t+1}=\boldsymbol{V}^H
 \boldsymbol{\Sigma}_{\boldsymbol{x}}^{t+1}\boldsymbol{V}
 \label{Sigma-s-t}
 \end{align}
and the $i$th diagonal element of the diagonal matrix
$\boldsymbol{\Sigma}_x^{t+1}$ can be calculated as
\begin{align}
\Sigma_{x_i}^{t+1}=
E\{x_{i}^2;r_{i,t},\Phi_{i,t}\}-\big[E\{x_{i};r_{i,t},\Phi_{i,t}\}\big]^2,\quad
\forall i  \label{x-Sigma-arbitrary}
\end{align}

\emph{Discussions:} For the Improved IFVB Detector, a specialized
form of $\boldsymbol{T}\triangleq
\boldsymbol{\overline{T}}+\frac{\delta}{\varepsilon}\boldsymbol{I}$
is considered. The reason is that such a specialized form will
lead to an LMMSE-like estimator for the LE step, i.e.
$\boldsymbol{\bar{r}}_t=\boldsymbol{s}_t+\boldsymbol{T}_t^{-1}
\boldsymbol{A}^{H}
(\boldsymbol{y}-\boldsymbol{A}\boldsymbol{s}_t)$ (see
(\ref{LE-arbitrary-a})). Here, $\boldsymbol{\bar{r}}_t$ denotes
the current estimate of $\boldsymbol{s}$ in the LE step, and
$\boldsymbol{s}_t$ denotes the previous estimate of
$\boldsymbol{s}$. If we treat $\boldsymbol{s}-\boldsymbol{s}_t$ as
a random variable following a Gaussian distribution with zero mean
and covariance matrix $\delta^{-1}\boldsymbol{I}$, and let
$\boldsymbol{\overline{T}}=\boldsymbol{\Sigma}^2=\boldsymbol{A}^H\boldsymbol{A}$.
It can be readily verified that
$\boldsymbol{\bar{r}}_t-\boldsymbol{s}_t=\boldsymbol{T}_t^{-1}
\boldsymbol{A}^{H}
(\boldsymbol{y}-\boldsymbol{A}\boldsymbol{s}_t)$ is in fact an
LMMSE estimator of $\boldsymbol{s}-\boldsymbol{s}_t$ given the
observation: $\boldsymbol{y}-\boldsymbol{A}\boldsymbol{s}_t=
\boldsymbol{A}(\boldsymbol{s}-\boldsymbol{s}_t)+\boldsymbol{n}$.
It is known that the LMMSE estimator can compensate for the noise
and effectively improve the estimation performance. This is the
reason why we choose such a specialized form of $\boldsymbol{T}$.

\begin{figure*}[t]
\setlength{\abovecaptionskip}{0pt}
\setlength{\belowcaptionskip}{0pt} \centering
\includegraphics[width=12cm]{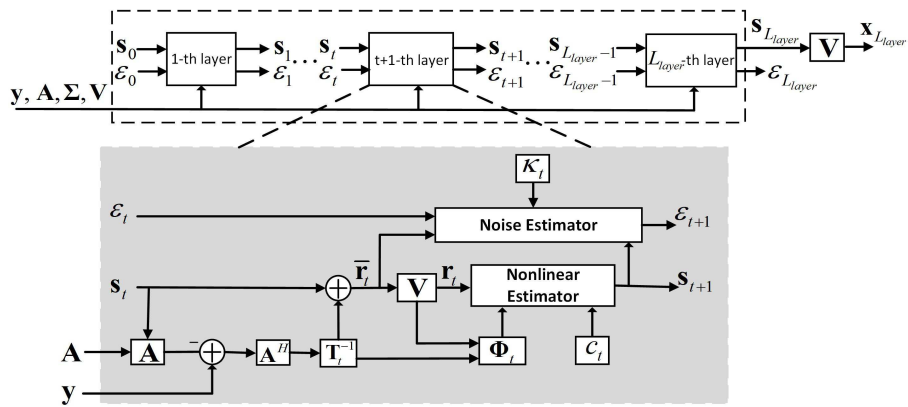}
\caption{Block diagram of the Improved-VBINet detector.}
\label{correlated-network}
\end{figure*}

\section{VBINet Detector for Correlated Channels}  \label{section-ImprovedIFBLNet}
\label{IFBLNet-chapter} In this section, we propose a model-driven
DL detector (referred to as Improved-VBINet) by unfolding the
improved IFVB detector. The network architecture of
Improved-VBINet is illustrated in Fig. \ref{correlated-network},
in which the network consists of $L_{\text{layer}}$ cascade
layers. The trainable parameters of all layers are denoted as
$\boldsymbol{\Omega}\triangleq\big\{\boldsymbol{\Psi},
\{\delta_t\}_{t=1}^{L_\text{layer}},
\{c_t\}_{t=1}^{L_\text{layer}},\{\kappa_t\}_{t=1}^{L_\text{layer}}\big\}$,
where the diagonal matrix $\boldsymbol{\Psi}$ and $\delta_t$ are
related to the diagonal matrix $\boldsymbol{T}_t$. For the
$(t+1)$th layer, the input includes $\boldsymbol{y}$,
$\boldsymbol{A}$, $\boldsymbol{\Sigma}$, $\boldsymbol{V}$,
$\boldsymbol{s}_{t}$, and $\varepsilon_{t}$, where
$\boldsymbol{s}_{t}$ and $\varepsilon_{t}$ denote the $t$th
layer's estimate of the signal and the noise variance,
respectively. Given $\{\boldsymbol{y},\boldsymbol{A},
\boldsymbol{\Sigma},
\boldsymbol{V},\boldsymbol{s}_{t},\varepsilon_{t}\}$, each layer
performs the following updates:
\begin{align}
\quad & \text{LE} : \quad\quad\quad~ \boldsymbol{\bar{r}}_t
=\boldsymbol{s}_t+\boldsymbol{T}_t^{-1}
 \boldsymbol{A}^{H}(
 \boldsymbol{y}-\boldsymbol{A}\boldsymbol{s}_t) \label{LENet-arbitrary-a},
\\
\quad & \quad\quad\quad\quad\quad~
\boldsymbol{r}_t=\boldsymbol{V}\boldsymbol{\bar{r}}_t \label{LENet-arbitrary-b}
\\
\quad & \text{NLE} : \quad
 x_{i, t+1} = E\{x_i;r_{i,t}, \Phi_{i,t}\},
\quad \forall i
 \label{NLENet-arbitrary-a}
 \\
\quad & \quad\quad\quad~~~~ \boldsymbol{s}_{t+1} =
c_t\boldsymbol{V}^H\boldsymbol{x}_{t+1} +(1-c_t)\boldsymbol{s}_t
\label{NLENet-arbitrary-b}
 \\
\quad & \text{Update of $\varepsilon_{t+1}$} :\quad
\varepsilon_{t+1}=\frac{a+\frac{N_r}{2}}{b+\frac{1}{2}
g_{\text{net}}(\boldsymbol{s}_{t+1},\boldsymbol{s}_t)}
\label{NLENet-eps}
\end{align}
where
\begin{align}
& \quad\quad\quad\quad
\boldsymbol{T}_t\triangleq\boldsymbol{\overline{T}}+
\frac{\delta_t}{\varepsilon_t}\boldsymbol{I}, \quad
\boldsymbol{\overline{T}}\triangleq\boldsymbol{\Sigma}^2+\boldsymbol{\Psi}^2
\\
 & \quad\quad\quad\quad
\boldsymbol{\bar{\Phi}}_t=\frac{1}{\varepsilon_t}\boldsymbol{T}_t^{-1}
,  \quad\quad\quad\quad\quad\quad\\
 & \quad\quad\quad\quad
\Phi_{i,t}=\boldsymbol{v}_i^r\boldsymbol{\bar{\Phi}}_t(\boldsymbol{v}_i^r)^H,
\quad \forall i \quad\quad\quad\quad\quad\quad
\\
&
g_{\text{net}}(\boldsymbol{s}_{t+1},\boldsymbol{s}_{t})
\nonumber\\
 &=
 \|\boldsymbol{y}-\boldsymbol{A}\boldsymbol{s}_{t}\|_2^2+
2\Re\big\{(\boldsymbol{s}_{t+1}-\boldsymbol{s}_{t})^H\boldsymbol{A}^H
(\boldsymbol{A}\boldsymbol{s}_{t}-\boldsymbol{y})\big\}
\nonumber\\
& + \big(\boldsymbol{s}_{t+1}-\boldsymbol{s}_{t}\big)^H
\boldsymbol{\overline{T}}\big(\boldsymbol{s}_{t+1}-\boldsymbol{s}_{t}\big)
 +  \text{Tr}\Big(\boldsymbol{\overline{T}}
 \boldsymbol{\Sigma}_{\text{net}}^{t+1}
 \Big),   \quad\quad\quad
 \label{g-net-arbitrary}
 \\
 &\quad\quad\quad\quad \boldsymbol{\Sigma}_{\text{net}}^{t+1}=c_t^2 \kappa_t^2\boldsymbol{I}.
 \label{var-net-arbitrary}
 \end{align}

We see that the update formulas
(\ref{LENet-arbitrary-a})--(\ref{var-net-arbitrary}) are similar
to the improved IFBL detector's update formulas
(\ref{LE-arbitrary-a})--(\ref{x-Sigma-arbitrary}). Nevertheless,
there are several major differences. Firstly, in the LE step,
$\boldsymbol{T}_t$ which is calculated according to certain rules
is replaced by a trainable diagonal matrix $\boldsymbol{T}_t$ due
to the use of the learnable variables $\big\{\boldsymbol{\Psi},
\{\delta_t\}_{t=1}^{L_\text{layer}}\big\}$. Specifically, the
linear estimator module turns out to have a form similar to the
LMMSE estimator, i.e.
\begin{align}
  \boldsymbol{\bar{r}}_t
  &=\boldsymbol{s}_t+\boldsymbol{T}_t^{-1}
 \boldsymbol{A}^{H}\Big(
 \boldsymbol{y}-\boldsymbol{A}\boldsymbol{s}_t\Big)
 \nonumber\\
  &
 =\boldsymbol{s}_t+\Big(\boldsymbol{\overline{T}}
 +\frac{\delta_t}{\varepsilon_t}\boldsymbol{I}\Big)^{-1}
 \boldsymbol{A}^{H}\Big(
 \boldsymbol{y}-\boldsymbol{A}\boldsymbol{s}_t\Big)
 \label{p-est-1}
  \end{align}
Secondly, in the NLE step, $\boldsymbol{s}_{t+1}$ is updated by
applying an adaptive damping scheme with a learnable parameter
$c_t$. Due to the use of the term $(1-c_t)\boldsymbol{s}_t$ in the
NLE step, the output of the NLE is the mean of a new variable
$\boldsymbol{s}_{\text{net}}$ defined as
\begin{align}
\boldsymbol{s}_{\text{net}}= c_t \boldsymbol{s}+ (1-c_t)
\boldsymbol{s}_t
\end{align}
Recalling that $\boldsymbol{s}=\boldsymbol{V}^H\boldsymbol{x}$,
thus the mean of $\boldsymbol{s}_{\text{net}}$ is given by
\begin{align}
\boldsymbol{s}_{t+1}=\langle\boldsymbol{s}_{\text{net}}\rangle=
c_t \boldsymbol{V}^H E\{\boldsymbol{x};\boldsymbol{r}_t,
\boldsymbol{\Phi}_t\}+ (1-c_t) \boldsymbol{s}_t
\end{align}
and the covariance matrix $\boldsymbol{\Sigma}_{\text{net}}^{t+1}$
of the random variable $\boldsymbol{s}_{\text{net}}$ is given by
\begin{align}
\boldsymbol{\Sigma}_{\text{net}}^{t+1} = c_t^2
\boldsymbol{\Sigma}_{s}^{t+1}=c_t^2\boldsymbol{V}^H
 \boldsymbol{\Sigma}_{\boldsymbol{x}}^{t+1}\boldsymbol{V}
 \label{snet-var}
\end{align}
Here $\boldsymbol{\Sigma}_{\boldsymbol{x}}^{t+1}$ is a diagonal
matrix whose diagonal elements can be calculated according to
(\ref{x-Sigma-arbitrary}). To further reduce the complexity, we
approximate $\boldsymbol{\Sigma}_{\boldsymbol{x}}^{t+1}$ as
$\kappa_t^2\boldsymbol{I}$, where $\kappa_t$ is a learnable
parameter. Thus we have
\begin{align}
\boldsymbol{\Sigma}_{\text{net}}^{t+1}=
 c_t^2 \kappa_t^2\boldsymbol{I}
\end{align}
This explains how (\ref{var-net-arbitrary}) is derived.

\emph{Remark 1:} We see that the total number of learnable
parameters of the improved-VBINet is $|\boldsymbol{\Omega}|= N_t +
3 L_{\text{layer}}$, as each layer shares the same learnable
variable $\boldsymbol{\Psi}$ but $\{\delta_t, c_t,\kappa_t\}$ are
generally different for different layers. The total number of
learnable parameters of the Improved-VBINet is comparable to that
of OAMPNet.

\subsection{Computational Complexity Analysis} We
discuss the computational complexity of our proposed VBINet
detectors. Specifically, for the i.i.d. Gaussian channels, at each
layer the complexity of the proposed VBINet detector is dominated
by the multiplication of an $N_t\times N_r$ matrix and an
$N_r\times 1$ vector. Therefore the overall complexity of the
proposed VBINet is of order $\mathcal{O}(N_r N_t
L_{\text{layer}})$. For the correlated channels, the proposed
Improved-VBINet detector requires to conduct a truncated SVD
decomposition of an $N_r\times N_t$ matrix in the initialization
stage. Therefore the overall complexity of the proposed
Improved-VBINet detector is of order $\mathcal{O}(N_r N_t
L_{\text{layer}}+N_t^3)$. Similar to the proposed VBINet, the
complexity of MMNet is of order $\mathcal{O}(N_r N_t
L_{\text{layer}})$. As for the OAMPNet, it requires to perform an
inverse of an $N_r\times N_r$ matrix per layer, which results in a
complexity of $\mathcal{O}(N_r^3 L_{\text{layer}})$ in total.
Since we usually have $N_r\gg N_t$ for massive MIMO detection, the
complexity of the OAMPNet is much higher than our proposed method.
This is also the reason why, for large-scale scenarios, training
and test of OAMPNet becomes computationally prohibitive.

\section{Simulation Results}    \label{section-simulation}
In this section, we compare the proposed VBINet-based
detector\footnote{Codes are available at
https://www.junfang-uestc.net/codes/VBINet.rar} with
state-of-the-art methods for i.i.d. Gaussian, correlated Rayleigh
and realistic 3GPP MIMO channel matrices. In the following, we
first discuss the implementation details of the proposed method
and other competing algorithms.

\subsection{Implementation Details}
During training, the learnable parameters are optimized using
stochastic gradient descent. Thanks to the auto-differentiable
machine-learning frameworks such as TensorFlow
\cite{AbadiAgarwal16} and PyTorch \cite{PaszkeGross17}, the
learnable parameters can be easily trained when the loss function
is determined. In our experiments, the loss function used for
training is given by
\begin{align}
f_{loss}=\frac{1}{L_{\text{layer}}}\sum_{t=1}^{L_{\text{layer}}}
\|\boldsymbol{x}_t-\boldsymbol{x}\|_2^2
\end{align}
in which $\|\boldsymbol{x}_t-\boldsymbol{x}\|_2^2$ denotes the
square error between the output of the $t$-th layer and the true
signal $\boldsymbol{x}$.

In addition to the classical detectors such as the ZF and LMMSE,
we compare our proposed method with the iterative algorithm OAMP
and the following two state-of-the-art DL-based detectors: OAMPNet
\cite{HeWen20} and MMNet \cite{KhaniAlizadeh20}. The ML detector
is also included to provide a performance upper bound for all
detectors. Given a set of training samples, the neural network is
trained by the Adam optimizer \cite{KingmaBa14} in the PyTorch
framework.

In our simulations, the training and test samples
$\{\boldsymbol{y},\boldsymbol{x}\}$ are generated according to the
model (\ref{data-model}), in which both the transmitted signal
$\boldsymbol{x}$, the channel $\boldsymbol{H}$, and the
observation noise $\boldsymbol{n}$ are randomly generated.
Specifically, the channel $\boldsymbol{H}$ is either generated
from an i.i.d. Gaussian distribution or generated according to a
correlated channel model as described below. The number of
training iterations is denoted as $N_{\text{iter}}$, and the batch
size for each
 iteration is denoted as $N_{\text{batch}}$. The signal-to-noise ratio
(SNR) is defined as
\begin{align}
\text{SNR}(\text{dB})= 10\log
\Big(\frac{E[\|\boldsymbol{H}\boldsymbol{x}\|_2^2]}{E[\|\boldsymbol{n}\|_2^2]}\Big)
\end{align}
The training samples in each training iteration includes samples
drawn from different SNRs within a certain range. For all
detectors, the output signal will be rounded to the closest point
on the discrete constellation set $\mathcal{C}$. Details of
different detectors are elaborated below.

$\bullet~\textbf{\text{ZF}}$: It is a simple yet classical decoder
which has a form as
$\boldsymbol{\hat{x}}=(\boldsymbol{H}^H\boldsymbol{H})^{-1}
\boldsymbol{H}^H\boldsymbol{y}$.

$\bullet~\textbf{\text{LMMSE}}$: The LMMSE detector takes a form
of
$\boldsymbol{\hat{x}}=\boldsymbol{H}^H\big(\boldsymbol{H}\boldsymbol{H}^H+
\frac{1}{\varepsilon\cdot
P_x}\boldsymbol{I}\big)^{-1}\boldsymbol{y}$. Here $P_x$ is the
average power of the QAM signal. In our simulation, we set
$P_x=1$.

$\bullet~\textbf{\text{IFVB}}$: The algorithm is summarized in
\ref{sum-IFBL}. The maximum number of iterations is set to
$L_{\text{layer}}=100$.

$\bullet~\textbf{\text{OAMP}}$: It is an iterative algorithm
developed based on the AMP algorithm. The maximum number of
iterations is set to $L_{\text{layer}}=100$.

$\bullet~\textbf{\text{OAMPNet}}$: OAMPNet is a DL-based
detector developed by unfolding the OAMP detector \cite{HeWen20}. In our
simulations, the number of layers of the OMAPNet is set to $10$,
and each layer has $4$ learnable variables.

$\bullet~\textbf{\text{MMNet-iid}}$: The $\text{MMNet-iid}$
\cite{KhaniAlizadeh20} is specifically designed for i.i.d.
Gaussian channels. In our simulations, the number of layers of the
MMNet-iid is set to $10$, and each layer has $2$ learnable
variables.

$\bullet~\textbf{\text{MMNet}}$:  The $\text{MMNet}$
\cite{KhaniAlizadeh20} can be applied to arbitrarily correlated
channel matrices, which needs to learn a matrix $\boldsymbol{A}_t$
for each layer, and the total number of learnable parameters is $2
N_t(N_r+1)$ per layer.

$\bullet~\textbf{\text{VBINet}}$: The architecture of our proposed
VBINet is shown in Section \ref{IFBL-iid-chapter}, which has
$L_{\text{layer}}$ layers and $N_t + L_{\text{layer}}$ learnable
variables.

$\bullet~\textbf{\text{Improved-VBINet}}$: Applicable for
arbitrarily correlated matrices, the proposed Improved-VBINet has
$L_{\text{layer}}$ layers and $N_t + 3 L_{\text{layer}}$ learnable
variables.

$\bullet~\textbf{\text{ML}}$: It is implemented by using a highly
optimized Mixed Integer Programming package Gurobi \cite{optimizationinc15}
\footnote{The code is shared by \cite{KhaniAlizadeh20}.}.

\begin{figure}[t]
\setlength{\abovecaptionskip}{0pt}
\setlength{\belowcaptionskip}{0pt} \centering
\includegraphics[width=7.1cm]{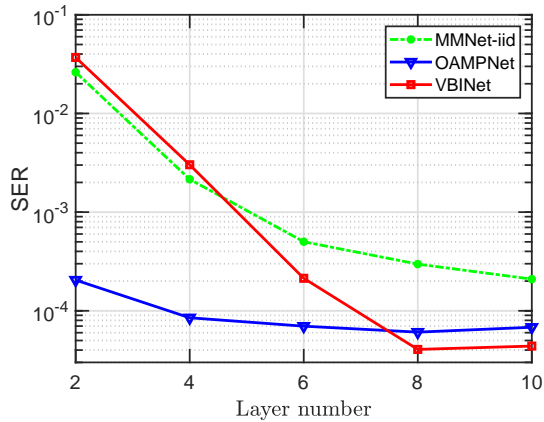}
\caption{SER vs. Number of layers number on i.i.d. Gaussian
channels.} \label{layer-num}
\end{figure}

\begin{figure}[t]
\setlength{\abovecaptionskip}{0pt}
\setlength{\belowcaptionskip}{0pt} \centering
\includegraphics[width=7.1cm]{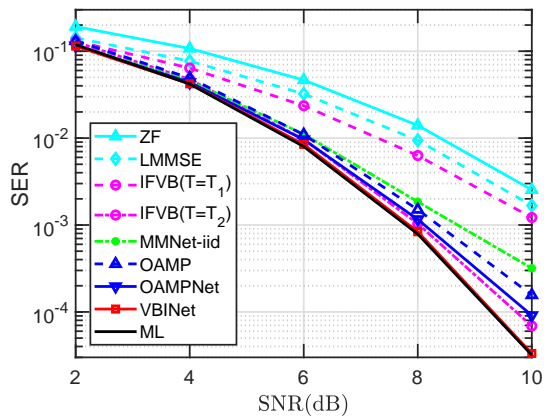}
\caption{SERs of respective schemes vs. SNR on i.i.d. Gaussian
channels.} \label{perfectCSI-IID-QPSK}
\end{figure}

\subsection{Results}
\subsubsection{Convergence Evaluation}
The convergence property of our proposed network is evaluated.
Fig. \ref{layer-num} depicts the symbol error rate (SER) of
different detectors as a function of the number of layers
$L_{\text{layer}}$ with QPSK modulation and i.i.d. Gaussian
channels, where we set $N_t=16$, $N_r=32$, $N_{\text{batch}}=500$,
and $N_{\text{iter}}=10^4$. We see that our proposed VBINet-based
detector converges in less than $10$ layers, and after
convergence, the proposed detector can achieve a certain amount of
performance improvement over the OAMPNet and MMNet-iid.

\subsubsection{I.I.D. Gaussian Channels}
We first examine the detection performance of our proposed
detector on i.i.d. Gaussian channels. We consider an offline
training mode, in which training is performed over randomly
generated channels and the performance is evaluated over another
set of randomly generated channels.

Fig. \ref{perfectCSI-IID-QPSK} plots the SER of respective
algorithms as a function of the SNR, where QPSK modulation is
considered, and we set $N_t=16$, $N_r=32$, $N_{\text{batch}}=500$,
and $N_{\text{iter}}=10^4$. It can be observed that the ML
detector achieves the best performance among all detectors, and
the SER gap between our proposed VBINet and the ML detector is
negligible. Moreover, the proposed VBINet outperforms the OAMPNet.
We also see that the VBINet presents a clear performance advantage
over the IFVB, which is because a more suitable $\boldsymbol{T}$
learned by VBINet will lead to better performance. For the IFVB,
we use two different choices of $\boldsymbol{T}$ to show the
impact of $\boldsymbol{T}$ on the detection performance.
Specifically, the first choice of $\boldsymbol{T}$ is set to
$\boldsymbol{T}_1=(\lambda_{\text{max}}(\boldsymbol{H}^H\boldsymbol{H})+
\epsilon)\boldsymbol{I}$ and the second choice of $\boldsymbol{T}$
is set to $\boldsymbol{T}_2=\boldsymbol{H}_d$, where
$\boldsymbol{H}_d$ is a diagonal matrix which takes the diagonal
entries of $\boldsymbol{H}^H\boldsymbol{H}$. We see that the
latter choice leads to much better performance, which indicates
that the performance of IFVB depends largely on the choice of
$\boldsymbol{T}$. This is also the motivation why we learn
$\boldsymbol{T}$ in the VBINet.

\subsubsection{Correlated Rayleigh Channels} We now examine
the performance of respective detectors on correlated Rayleigh
channels. The correlated Rayleigh channel is modeled as
\begin{align}
\boldsymbol{H}=\boldsymbol{R}_{H}^{\frac{1}{2}}\boldsymbol{G}_{H}
\boldsymbol{T}_{H}^{\frac{1}{2}}
\end{align}
where entries of $\boldsymbol{G}_{H}$ follow i.i.d. Gaussian
distribution with zero mean and unit variance,
$\boldsymbol{R}_{H}$ and $\boldsymbol{T}_{H}$ respectively denote
the receive and transmit correlation matrices. Note that the
receive and transmit correlation matrices can be characterized by
a correlation coefficient $\rho$ \cite{Loyka01}, and a larger
$\rho$ means that the channel matrix has a larger condition
number. Again, in this experiment, offline training is considered.

In Fig. \ref{Perfect-correlated-IID-QPSK}, we plot the SER of
respective algorithms as a function of SNR, where we set
$\rho=0.8$, $N_t=16$, $N_r=32$, $N_{\text{batch}}=500$, and
$N_{\text{iter}}=10^4$. It can be observed that all DL detectors
suffer a certain amount of performance degradation as the channel
becomes ill-conditioned. Besides, MMNet fails to work in the
offline training mode, where the test channels are different from
the channels generated during the training phase. This is because
in the LE step, the MMNet treats $\boldsymbol{A}_t$ as an entire
trainable matrix. As a result, the trained neural network can only
accommodate a particular channel realization. We also observe that
both our proposed Improved-VBINet and OAMPNet work well in the
offline training mode, and the proposed Improved-VBINet achieves
performance slightly better than the OAMPNet.

\begin{figure}[t]
\setlength{\abovecaptionskip}{0pt}
\setlength{\belowcaptionskip}{0pt} \centering
\includegraphics[width=7.1cm]{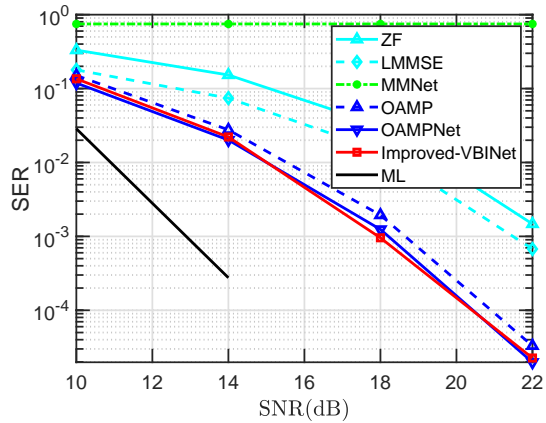}
\caption{SERs of respective detectors vs. SNR on correlated
Rayleigh channels.} \label{Perfect-correlated-IID-QPSK}
\end{figure}

\begin{figure}[t]
\setlength{\abovecaptionskip}{0pt}
\setlength{\belowcaptionskip}{0pt} \centering
\includegraphics[width=7.1cm]{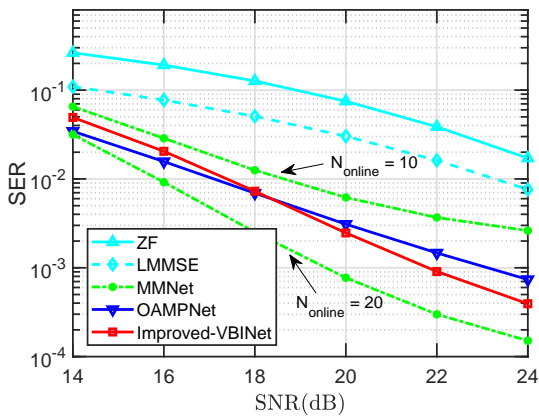}
\caption{SERs of respective detectors vs. SNR on realistic 3GPP
MIMO channels.} \label{3GPP_onlie_QPSK}
\end{figure}

\begin{figure}[t]
\setlength{\abovecaptionskip}{0pt}
\setlength{\belowcaptionskip}{0pt} \centering
\includegraphics[width=7.1cm]{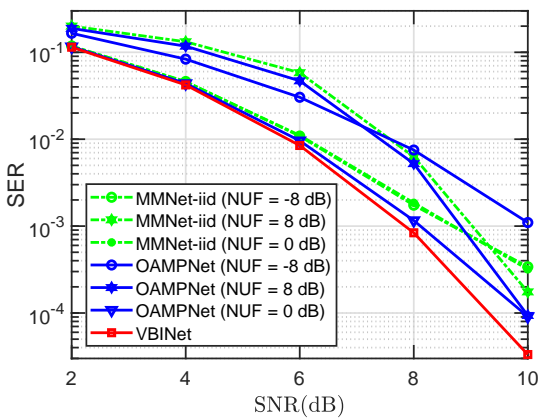}
\caption{SERs of respective detectors vs. SNR under different
NUFs, where i.i.d. Gaussian channels are considered.}
\label{untrueSNR-IID-QPSK-delta}
\end{figure}

\begin{figure}[t]
\setlength{\abovecaptionskip}{0pt}
\setlength{\belowcaptionskip}{0pt} \centering
\includegraphics[width=7.1cm]{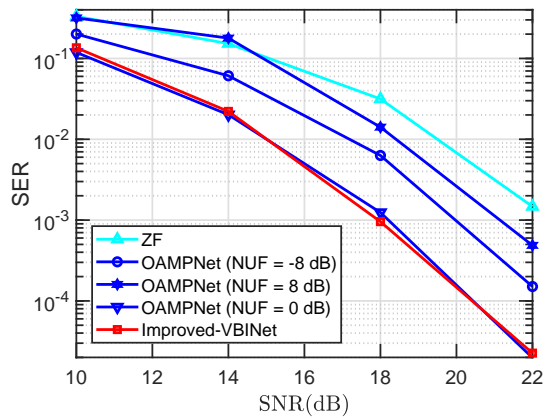}
\caption{SERs of respective detectors vs. SNR under different
NUFs, where correlated Rayleigh channels are considered.}
\label{untrueSNR-correlated-QPSK-delta}
\end{figure}

\subsubsection{3GPP MIMO Channels}
Next, we compare the performance of different detectors on
realistic 3GPP 3D MIMO channels which are generated by the
QuaDRiGa channel simulator \cite{JaeckelRaschkowski14, 3GPP}. The
parameters used to generate the channels are set the same as those
in \cite{KhaniAlizadeh20}, except that the bandwidth is set to
$1$MHz, the number of effective sub-carriers is set to $F=128$,
and the number of sequences is $2$. In this experiments, online
training is considered, where the model parameters are trained and
tested on the same realization of the channel. Specifically, we
first train the neural network associated with the first
subcarrier's channel. The trained model parameters are then used
as initial values of the neural network of the second subcarrier,
and the second subcarrier's network is employed as a start to
train the third subcarrier's network, and so on. Due to the
channel correlation among adjacent subcarriers, the rest
subcarriers' neural networks can be efficiently trained with a
small amount of data samples.

Fig. \ref{3GPP_onlie_QPSK} depicts the SER of respective
algorithms as a function of SNR, where we set $N_t=16$, $N_r=32$,
and the QPSK modulation is considered. For the first subcarrier,
the number of training iterations is $1000$ and the batch size is
$500$, while for all subsequent subcarriers, the number of online
training iterations is $10$ and the batch size is $500$. In Fig.
\ref{3GPP_onlie_QPSK}, we also report results of the MMNet when
the number of online training iterations is set to
$N_{\text{online}}=20$. We see that our proposed Improved-VBINet
achieves performance similar to the OAMPNet. Also, it can be
observed that MMNet achieves the best performance when a
sufficient amount of data is allowed to be used for training.
Nevertheless, the performance of MMNet degrades dramatically with
a reduced amount of training data. Note that MMNet needs to learn
an entire matrix $\boldsymbol{A}_t$ at each layer. With a large
number of learnable parameters, the MMNet can well approximate the
best detector when sufficient training is performed, whereas
incurs a significant amount of performance degradation when
training is insufficient.

\subsubsection{Noise Uncertainty}
Next, we examine the impact of noise variance uncertainty on the
performance of different DL-based detectors. As mentioned earlier,
different from our proposed VBINet which is capable of estimating
the noise variance automatically, both OAMPNet and MMNet require
the knowledge of noise variance for training and test. Note that
the noise variance can be assumed known during the training phase,
as we usually have access to the model parameters used to generate
the training data samples. But the same assumption does not hold
true for the test data. As a result, when evaluating the
performance on the test data, both OAMPNet and MMNet have to
replace the noise variance by its estimate. Let the estimated
noise variance be $1/\hat{\varepsilon}=\eta(1/\varepsilon)$. We
also define the noise uncertainty factor (NUF) as
$\text{NUF}\triangleq 10\log_{10}\eta$. In this experiment, an
offline training mode is considered.

In Fig. \ref{untrueSNR-IID-QPSK-delta}, we plot the SER of
respective algorithms as a function of SNR with different NUFs,
where i.i.d. Gaussian channels and QPSK modulation are considered,
and we set $N_t=16$, $N_r=32$, $N_{\text{batch}}=500$, and
$N_{\text{iter}}=10^4$. It can be observed that both MMNet-iid and
OAMPNet incur a considerable amount of performance loss when the
estimated noise variance deviates from the true one. The
performance gap between OAMPNet and VBINet becomes more pronounced
when the estimate of the noise variance is inaccurate. Fig.
\ref{untrueSNR-correlated-QPSK-delta} plots the SER of respective
DL detectors as a function of SNR for correlated Rayleigh
channels, where we set $\rho=0.8$. Since MMNet fails to work in
the offline training mode for correlated channels, its results are
not included. Again, we see that OAMPNet suffer a substantial
performance degradation when an inaccurate noise variance is used.

\begin{figure}[t]
\setlength{\abovecaptionskip}{0pt}
\setlength{\belowcaptionskip}{0pt} \centering
\includegraphics[width=7.1cm]{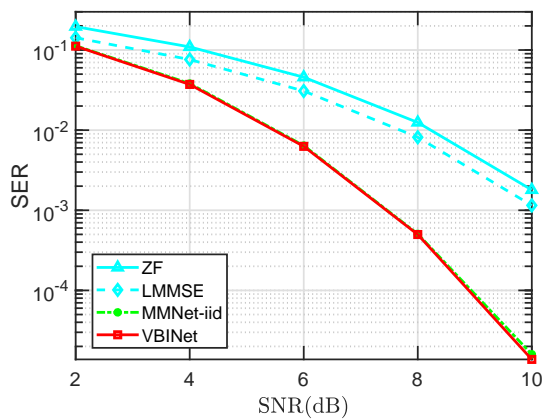}
\caption{SERs of different detectors vs. SNR on $128\times 64$
i.i.d. Gaussian channels.} \label{128size-IID-QPSK}
\end{figure}

\begin{figure}[t]
\setlength{\abovecaptionskip}{0pt}
\setlength{\belowcaptionskip}{0pt} \centering
\includegraphics[width=7.1cm]{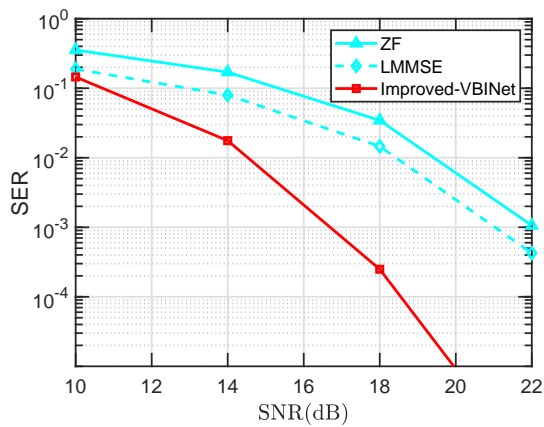}
\caption{SERs of different detectors vs. SNR on $128\times 64$
correlated Rayleigh channels.} \label{128size-correlated-QPSK}
\end{figure}

\subsubsection{Large-Scale MIMO Scenarios}
We examine the detection performance of our proposed method in the
large-scale MIMO scenario, in which the number of antennas at the
BS is up to 128. In our experiments, the QPSK modulation is
adopted, and we set $N_t=64$, $N_r=128$, $N_{\text{batch}}=500$,
and $N_{\text{iter}}=10^4$. For the correlated Rayleigh channel,
$\rho$ is set to $\rho=0.8$. Fig. \ref{128size-IID-QPSK} and Fig.
\ref{128size-correlated-QPSK} plot the SERs of respective
algorithms as a function of SNR for i.i.d. Gaussian and correlated
Rayleigh channels, respectively. Note that in the LE step of the
OAMPNet, $\boldsymbol{\hat{W}}_t$ is updated as
$\hat{\boldsymbol{W}}_t =v_t^2\boldsymbol{H}^H(v_t^2
\boldsymbol{H}\boldsymbol{H}^H+\boldsymbol{R}_{n})^{-1}$, which
involves computing the inverse of a matrix of size $N_r\times N_r$
at each layer. Hence for large-scale scenarios, training and test
of OAMPNet becomes computationally prohibitive. For this reason,
the results of OAMPNet are not included. Also, the results of
MMNet are not included for correlated Rayleigh channels as it aims
to learn an entire matrix $\boldsymbol{A}_t$ for each layer and
thus fails to work well in the offline training mode. Results in
Fig. \ref{128size-IID-QPSK} and Fig. \ref{128size-correlated-QPSK}
show that the proposed VBINet can achieve superior performance
while with a low computational complexity.

\section{Conclusions}   \label{section-conclusion}
In this paper, we proposed a model-driven DL network for MIMO
detection based on a variational Bayesian framework. Two networks
are respectively developed for i.i.d. Gaussian channels and
arbitrarily correlated channels. The proposed networks, referred
to as VBINet, have only a few learnable parameters and thus can be
efficiently trained with a moderate amount of training. Simulation
results show that the proposed detectors provide competitive
performance for both i.i.d. Gaussian channels and realistic 3GPP
MIMO channels. Moreover, the VBINet-based detectors can
automatically determine the noise variance, and achieve a
substantial performance improvement over existing DL-based
detectors such as OAMPNet and MMNet in the presence of noise
uncertainty.


\end{document}